\newtheorem{theorem}{Theorem}
\newtheorem{proposition}[theorem]{Proposition}
\newtheorem{lemma}[theorem]{Lemma}
\newcommand{\decisionpb}[4]{
  
        \begin{minipage}{#4\textwidth}
                #1\\
                \emph{Instance:} #2\\ 
                \emph{Question:} #3
        \end{minipage}
\vspace{0.2cm}
}
\newcommand{\id}{\gamma^{\text{\tiny{ID}}}}
\definecolor{bleujoli}{RGB}{23.,74.,124.}
\definecolor{rosejoli}{RGB}{176.,0,64.}
\definecolor{vertjoli}{RGB}{34, 120, 15}
\definecolor{rougejoli}{RGB}{217.,1,21.}
\definecolor{orang}{RGB}{255,70,0}
\tikzstyle{noeud}=[circle,inner sep=2, minimum size =2 pt, line width = 1pt, draw=black, fill=white]
\tikzstyle{noeud_ext}=[circle,inner sep=3, minimum size =6 pt, line width = 1pt, draw=black, fill=bleujoli]
\tikzstyle{cercle}=[line width = 1.5pt, draw=rougejoli]
\newcommand{\Dabcd}[4]{\mathcal{D}^{[#3,#4]}_{[#1,#2]}}
\begin{document}

\title{Identification of points using disks}
\author{Valentin Gledel\footnote{\noindent Univ Lyon, Université Claude Bernard Lyon 1, LIRIS - CNRS UMR 5205, F69622 (France). E-mail: valentin.gledel@univ-lyon1.fr}
\and Aline Parreau\footnote{\noindent Univ Lyon, Université Claude Bernard Lyon 1, CNRS, LIRIS - CNRS UMR 5205, F69622 (France). E-mail: aline.parreau@univ-lyon1.fr}}

\maketitle

\begin{abstract}
  We consider the problem of identifying $n$ points in the plane using disks, i.e., minimizing the number of disks so that each point is contained in a disk and no two points are in exactly the same set of disks. This problem can be seen as an instance of the {\em test covering problem} with geometric constraints on the tests. We give tight lower and upper bounds on the number of disks needed to identify any set of $n$ points of the plane. In particular, we prove that if there are no three colinear points nor four cocyclic points, then $2 \lceil n/6 \rceil + 1$ disks are enough, improving the known bound of $\lceil (n+1)/2 \rceil$ when we only require that no three points are colinear.
  We also consider complexity issues when the radius of the disks is fixed, proving that this problem is NP-complete. In contrast, we give a linear-time algorithm computing the exact number of disks if the points are colinear.
\end{abstract}


\section{Introduction}

Let $\mathcal P$ be a set of $n$ points of the plane $\mathbb R^2$. What is the minimum number of disks so that each point is contained in a disk and no two points are in exactly the same set of disks? In other words, we want to find a minimum set of disks such that every point is in a disk and the disks that contain a given point uniquely determine it. Such a set of disks (not necessarily minimum) is said to {\em identify} $\mathcal P$. See Figure~\ref{fig:ex} for an example of an identifying set of disks.

\begin{figure}[ht]
  \begin{center}
  \begin{tikzpicture}

\foreach \pos in {(0,0),(-0.5,1.5),(1,-0.3),(1,2),(1.2,0.7),(1.9,0.3),(1.7,1.8),(2.3,1)}{
\node[noeud] at \pos {};}

\draw[cercle] (-0.1,0.5) circle (1.65cm);
\draw[cercle] (0.6,1.65) circle (1.3cm);
\draw[cercle] (2.15,0.5) circle (1.6cm);
\draw[cercle] (1.3,0.05) circle (1cm);

\end{tikzpicture}
  \end{center}
  \label{fig:ex}
  \caption{A set of four disks optimally identifying eight points.}
  \end{figure}
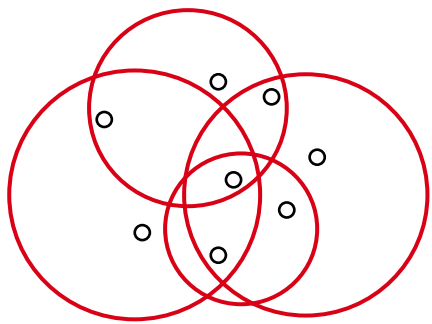

The motivation of this problem comes from the localization of indviduals and more generally from the following setting of identification problems: Given a set of individuals with binary attributes that each individual can have or not, the goal is to choose a minimum number of attributes in such a way that each individual has a unique set of attributes. This problem is known in the literature as the {\em test covering problem} \cite{MS85} or {\em identifying codes problem in hypergraph} \cite{PhDMoncel} since one can represent the data by a hypergraph where individuals are vertices and attributes are hyperedges. It has many application in particular in medical diagnostics and pattern recognition \cite{K95,MS85,PP80}.

In a context of localization, the attributes are defined by the metric of the space where individuals live. As an example, in the context of {\em identifying codes} \cite{KCL98}, individuals are vertices of a graph. Then the attributes are defined by the closed neighbourhoods, meaning ``to be closed to''. Choosing some attributes is equivalent to setting detectors on some vertices that are able to detect errors in their neighbourhood. Then the set of detectors is able to detect any intrusion in the graph. Indeed, if there is an intrusion on a vertex, then the set of detectors that have detected something uniquely determines where the intrusion is. Locating-dominating sets \cite{S88,SR84} and open locating dominating sets \cite{SS11} are defined in a similar way. These concepts are studied by various authors since the 1970s and 
1980s, and have been applied to various regions such as fault-detection in networks~\cite{KCL98,UTS04} or graph isomorphism testing~\cite{B80}.

In this paper, we consider that individuals (which are the points in our problem) are living in $\mathbb R^2$. A detector can be placed anywhere, with any radius of detection and thus is represented by a disk. It can be formulated as a test covering instance: a set of individuals share an attribute if they can be isolated from the other individuals by a disk. It is also related to identifying codes in graphs: if the detectors must be located on points and have a fixed radius, a natural graph structure emerges. Then our problem is equivalent to the problem of identifying codes in unit disk graphs (in the general case) or in unit interval graphs (if points are colinear).

Another motivation comes from the notion of {\em geometric separator} in computational geometry \cite{DHMS01}. Let $C_1,\ldots,C_k$ be $k$ finite disjoint sets of $\mathbb R^2$. A finite set $S$ of curves in the plane is a {\em separator} for the sets $C_1$,...,$C_k$ if every connected component in $\mathbb R^2-S$ contains points from only one set $C_i$. Finding separators is a classical problem of computationnal geometry, in particular when considering image analysis. The most studied case is $k=2$ and separation with lines or circles \cite{AHMSS00}. Our problem -- if we forget the condition that each point must be in a disk -- can be considered as a separating problem where each set $C_i$ contains only one point and $S$ is a union of circles. This problem has been mentionned by Gerbener and T\'oth \cite{GT13} who have considered more generally separation with convex sets. They in particular proved that $\lceil n/2\rceil$ circles are enough to separate $n$ points even if they are in a general configuration (no three colinear points). Separators of single points have also been studied for lines. Bolland and Urrutia \cite{BU95} gave an algorithm of time complexity $O(n\log n)$ to find a family of $n/2$ lines that separates any set of $n$ points in a general configuration. C\v{a}linescu, Dumitrescu and Wan \cite{CDW05} proved that in the particular case where the lines are parallel to the axis, the problem is NP-complete and gave a constant approximation polynomial algorithm for this case. A natural extension in higher dimension, called {\em multi-modal sensor allocation problem}, has been defined in \cite{KSPS02}, making links with identification problems.
Note that the separating problem with lines is a subproblem of ours. Indeed, if the points are given, one can consider a line as a very large circle.

In Section \ref{sec:defi}, we give formal definitions and background that will need along the paper. In Section~\ref{sec:caspart}, we study particular configurations of points: colinear or forming a grid. For colinear points, we give the exact number of disks needed if any radius can be used. If the points are on a grid, we give exact values for height 2 and bounds for larger heights. In Section \ref{sec:bounds}, we give tight lower and upper bounds: we prove that at least $\Theta(\sqrt{n})$ and at most $\lceil (n+1)/ 2 \rceil$ disks are necessary ($n$ is the number of points). If moreover there are no three colinear points nor four cocyclic points, then we prove, using Delaunay triangulation, that we need at most $2\lceil n/6 \rceil +1$ disks. Finally, in Section \ref{sec:complexity}, we discuss the complexity of the problem when the radius is fixed , we prove that it is NP-complete in the general case but that there is a linear algorithm to solve it when the points are colinear.

\section{Definition and background}\label{sec:defi}

\subsection{Formal definition}

Let $\mathcal P$ be a set of points of $\mathbb R^2$. A disk of radius $r\in \mathbb R$ and center $c\in \mathbb R^2$ is the set of points of $\mathbb R^2$ at distance at most $r$ of $c$. A point $P\in \mathcal P$ is {\em covered by} a disk if it belongs to it. Two points $P$ and $Q$ of $\mathcal P$ are {\em separated} by a disk $D$ if exactly one of them is covered by $D$. A set of disks $\mathcal D$ is {\em identifying} $\mathcal P$ if it is covering all the points of $\mathcal P$ and separating all the pairs of points of $\mathcal P$. We denote by $\id_D(\mathcal P)$ the minimum number of disks needed to identify $\mathcal P$. Let $r\in \mathbb R$, we denote by $\id_{D,r}(\mathcal P)$ the minimum number of disks of radius $r$ needed to identify $\mathcal P$. When $r$ is large enough compare to the distances between the points of $\mathcal P$, any disk of radius $r$ is separating the same pairs of points as some half-plane. Hence, identification with half-planes is a particular case of identification with disks of fixed size. We will denote by $\id_{D,\infty}(\mathcal P)$ the corresponding number. 

\paragraph{Remark.}
In our definition, we ask that every point of $\mathcal P$ must be covered by at least one disk. This choice could be discussed. Indeed, it is not the case for similar notions like {\em separating families} or {\em test covers}. We choose this definition to be consistent with our first motivation: in a context of localization, our detection system must be able to detect if there is an intrusion or not, which is possible only if all the points are covered. This is the reason why in {\em identifying codes} there is the condition of domination (see Section \ref{sec:preli} for formal definition of identifying codes).
However, if a set of disks is separating all the pairs of points of $\mathcal P$, at most one point is not covered (otherwise all the points that are not covered will not be separated). Therefore, we need at most one more disk to obtain an identifying set. Hence the difference between the two values is at most one and our results can be easily adapted for only-separating sets.

  \

For any radius $r$ and any points $P$ and $Q$ of $\mathbb R^2$, there is always a disk of radius $r$ that separates them. Hence, $\id_D(\mathcal P)$ and $\id_{D,r}(\mathcal P)$ are well-defined and always smaller than ${\binom{|\mathcal P|}{2}}$. Moreover, if the radius is not fixed or small enough, one can take for each point a disk containing only this point and then forms an identifying set of disks. Thus, we have $\id_D(\mathcal P)\leq |\mathcal P|$. About the lower bound, consider a set $\mathcal D$  of $k$ disks identifying $\mathcal P$. Since each point is contained in a unique non-empty subset of $\mathcal D$, there are at most $2^k-1$ points in $\mathcal P$, leading to the following lower bound on $\id_{D}(\mathcal{P})$:

\begin{lemma}
Let $\mathcal{P}$ be a set of $n$ points of $\mathbb R^2$, then
$\id_{D}(\mathcal P) \geq  \lceil \log(n+1) \rceil$.
\label{lem:logbound}
\end{lemma}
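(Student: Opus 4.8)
The plan is to use a simple counting (information-theoretic) argument. First I would fix an identifying set of disks $\mathcal D = \{D_1, \dots, D_k\}$ with $k = \id_D(\mathcal P)$; such a set exists and is finite since $\id_D(\mathcal P)$ is well-defined by the discussion preceding the lemma. For each point $P \in \mathcal P$, consider its \emph{signature}, the subset $S(P) = \{i : P \in D_i\} \subseteq \{1, \dots, k\}$, i.e., the collection of indices of disks covering $P$.

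Next I would record two properties of signatures coming directly from the definition of an identifying set. Since $\mathcal D$ covers every point, each $S(P)$ is non-empty; since $\mathcal D$ separates every pair of points, the map $P \mapsto S(P)$ is injective (if $S(P) = S(Q)$ with $P \ne Q$, then no disk separates $P$ and $Q$). Therefore $P \mapsto S(P)$ is an injection from $\mathcal P$ into the set of non-empty subsets of $\{1,\dots,k\}$, which has cardinality $2^k - 1$. Hence $n = |\mathcal P| \le 2^k - 1$.

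Finally I would solve for $k$: from $n \le 2^k - 1$ we get $2^k \ge n+1$, so $k \ge \log_2(n+1)$, and since $k$ is an integer, $k \ge \lceil \log(n+1) \rceil$. This is exactly $\id_D(\mathcal P) \ge \lceil \log(n+1) \rceil$.

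There is essentially no obstacle here: the argument is the standard pigeonhole/counting bound, and in fact the excerpt already sketches it in the sentence immediately preceding the lemma statement ("there are at most $2^k - 1$ points in $\mathcal P$"). The only point worth stating carefully is the injectivity of the signature map, which is precisely the separation condition, and the non-emptiness, which is precisely the covering condition; both are immediate. One could remark in passing that the bound holds verbatim for $\id_{D,r}(\mathcal P)$ and $\id_{D,\infty}(\mathcal P)$ as well, since the argument never uses anything about the disks beyond the fact that they form an identifying family.
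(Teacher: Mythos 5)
Your proof is correct and is exactly the paper's argument: the paper derives the lemma from the observation that each point is contained in a unique non-empty subset of the $k$ disks, so $n \le 2^k - 1$, which is precisely your signature-map counting. No differences worth noting.
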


These trivial lower and upper bounds are not tight and will be improved in Section \ref{sec:bounds}.

Finally, since a set of disks identifying $\mathcal P$ is identifying any subset of $\mathcal P$, we have the following lemma:

\begin{lemma}
Let $\mathcal{P}$ and $\mathcal{P'}$ be two sets of points of $\mathbb R^2$ with  $\mathcal{P}' \subseteq \mathcal{P}$, then
$\id_{D}(\mathcal P) \geq  \id_{D}(\mathcal{P}')$.
\label{lem:inclus}
\end{lemma}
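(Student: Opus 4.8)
The plan is to exhibit, for any optimal identifying set of $\mathcal P$, that the same set already identifies $\mathcal P'$; monotonicity of the minimum then follows immediately.

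First I would let $\mathcal D$ be a set of disks identifying $\mathcal P$ with $|\mathcal D| = \id_D(\mathcal P)$; such a set exists since $\id_D(\mathcal P)$ is well-defined. Next I would verify that $\mathcal D$ also identifies $\mathcal P'$. For the covering condition: every point of $\mathcal P'$ lies in $\mathcal P$ by hypothesis, and $\mathcal D$ covers every point of $\mathcal P$, so every point of $\mathcal P'$ is covered by some disk of $\mathcal D$. For the separating condition: any two distinct points $P,Q \in \mathcal P'$ are in particular two distinct points of $\mathcal P$, hence separated by some disk of $\mathcal D$; that disk separates them when viewed as a pair in $\mathcal P'$ as well, since the notion of separation depends only on which of the two points the disk covers. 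Thus $\mathcal D$ is an identifying set for $\mathcal P'$.

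Finally I would conclude $\id_D(\mathcal P') \le |\mathcal D| = \id_D(\mathcal P)$, because $\id_D(\mathcal P')$ is by definition the minimum cardinality of an identifying set of $\mathcal P'$ and $\mathcal D$ is one candidate.

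I do not expect any real obstacle here: the only thing being used is that both the covering requirement and the separation requirement are ``downward closed'' under taking subsets of points, which is immediate from their definitions in Section~\ref{sec:defi}. The same one-line argument would also apply verbatim to $\id_{D,r}$ for fixed radius $r$, since nothing in the argument uses the freedom to choose radii.
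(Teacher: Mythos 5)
Your proof is correct and matches the paper's argument, which simply observes that a set of disks identifying $\mathcal P$ also identifies any subset $\mathcal P' \subseteq \mathcal P$ (the paper states this in one line without further elaboration). You have just spelled out the same covering-and-separation inheritance explicitly.
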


\subsection{Related work}\label{sec:preli}

Among the related notions given in the introduction, we give formal definitions for three of them that we will need in the rest of the paper.

\paragraph{Separating families of disks.}

If $\mathcal D$ is only separating any pair of vertices of $\mathcal P$, $\mathcal D$ is a {\em separating family of disks}, studied by Gerbner and T\'oth \cite{GT13} in the more general context of convex sets. They in particular consider the parameter $s(n,\mathcal D)$ and $s'(n,\mathcal D)$) which stand for the maximum number of disks that are needed to separate any $n$-point set and any $n$-point set in general position (no three of its points are on a line). They prove that $s(n,\mathcal D)=s'(n,\mathcal D)=\lceil n/2 \rceil$. Since at most one more disk is necessary to obtain an identifying set of disks from a separating set of disks, it means that $\id_{D}(\mathcal P)$ is at most $\lceil n/2 \rceil+1$. We will improve this bound in Section \ref{sec:bounds} to $2\lceil n/6 \rceil +1$ if moreover no four points are cocyclic.

 \paragraph{Identifying codes in unit interval and unit disk graphs.}

Let $G=(V,E)$ be a graph.  A vertex $c$ {\em dominates} a vertex $x$ if $c$ is in the closed neighbourhood of $x$ (i.e: $x$ and its neighbours). It {\em separates} two vertices $x$ and $y$ if it is dominating exactly one of them. An {\em identifying code} of $G$ is a subset of vertices $C$ such that each vertex is dominated by some vertex of $C$ and each pair of vertices of $G$ is separated by some vertex of $C$. We denote by $\id(G)$ the minimum number of vertices in an identifying code of $G$. Note that $\id(G)$ is not always well-defined since $G$ might have two vertices with exactly the same neighbourhood and thus no vertex can separate them. Such vertices are called {\em twin vertices}. If a graph does not have any twins, then it has an identifying code (take for example all the vertices in $C$).

Identifying codes are closely related to identifying sets of disks when considering graphs of geometric intersections.
Given a set of geometric objects, one can define its intersection graph as follows. Vertices are the objects and there is an edge between two objects if they intersect. A class of graphs of particular interest for us is the class of {\em unit disk graphs} that are the intersection graphs of disks of radius 1.
Let $G$ be a unit disk graph and denote by $\mathcal P$ the set of centers of the disks forming $G$. Then an identifying code of $G$ is equivalent to an identifying set of $\mathcal P$ using disks that have radius 2 and are centered on points of $\mathcal P$. Indeed, a disk of radius 2 centered on a point $P$ of $\mathcal P$  contains all the points that are centers of disks of the closed neighbourhood of the disk corresponding to $P$ in $G$.
Identifying codes in unit disk graphs have been studied by M\"uller and Sereni \cite{MS09} who prove, in particular, that the minimization problem in NP-complete. If the points of $\mathcal P$ are colinear, then $G$ is a {\em unit interval graph}. The complexity of identifying codes in unit interval graphs is surprisingly still open \cite{FMNPV} (but has been proved to be NP-complete for interval graphs).

Junnila and Laihonen \cite{JL11} studied identifying codes in the grid $\mathbb Z^2$ using Euclidean balls. The underlying graph has the set $\mathbb Z^2$ as vertices and the closed neighbourhood are given by the Euclidean balls of a fixed radius $r$. This graph can also be seen as an (infinite) unit disk graph. They give lower and upper bounds  on the density of minimum identifying codes in this graph in function of $r$.

\paragraph{Identifying codes in hypergraphs.}
The notion of identifying codes can be extended to hypergraphs. Let $\mathcal H=(V,\mathcal E)$ be a hypergraph. An {\em identifying code} of $\mathcal H$ is a set $C\subseteq \mathcal E$ of hyperedges such that:
\begin{itemize}
\item each vertex of $\mathcal H$ is in at least one element of $C$;
  \item for each pair of vertices of $\mathcal H$, there is an element of $C$ containing exactly one element of the pair.
\end{itemize}

An identifying code in a graph $G$ is equivalent to an identifying code in the hypergraph of the closed neighbourhoods of $G$. As said in the introduction, this notion is known under different names and has actually been introduced before identifying codes in graphs, see \cite{PhDMoncel, MS85}.
Our problem can be reduced to identifying codes in hypergraph. Indeed, let $\mathcal P$ be a set of $n$ points of $\mathbb R^2$. Let $\mathcal H (\mathcal P)$ be the hypergraph with vertex set $\mathcal P$ and a set of points $E\subseteq \mathcal P$ is a hyperdege if there exists a disk $D$ such that $D\cap \mathcal P=E$. Then finding an identifying set of disks identifying $\mathcal P$ is equivalent to finding an identifying code in $\mathcal H(\mathcal P)$. Note that an hyperedge of $\mathcal H(\mathcal P)$ of size $k$ corresponds to a nonempty cell in the iterated Vorono\"i diagram of size $k$ of $\mathcal P$ and can be computed in $O(n)$ time \cite{orourke}. The whole hypergraph $\mathcal H(\mathcal P)$ can be obtained by computing all iterated Vorono\"i diagrams of $\mathcal P$. This can be done in time $O(n^3)$ and the number of hyperedges of $\mathcal H(\mathcal P)$ is of order $O(n^3)$ \cite{orourke2}.

\section{Particular configurations} \label{sec:caspart}

\subsection{Colinear points}

When points are located on a single line, the problem is completly solved with the following theorem.
\begin{theorem}
\label{th:align}
Let $\mathcal{P}$ be a set of $n$ colinear points, then $\id_{D}(\mathcal P) = \lceil \frac{n+1}{2} \rceil$.
\end{theorem}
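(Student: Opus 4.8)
The plan is to prove the two inequalities separately. For points on a line, a disk intersects the line in an interval (possibly empty, a point, or all of it in the degenerate large-radius case), and conversely any interval of the line arises as the trace of some disk. So identifying $n$ colinear points with disks is exactly the same as identifying them with intervals on the line: we need a family of intervals covering all points such that every pair of points is separated by some interval. I would first record this reduction, ordering the points as $P_1,\dots,P_n$ along the line, so that the ``attributes'' available are precisely the contiguous blocks $\{P_i,P_{i+1},\dots,P_j\}$.

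For the upper bound $\id_D(\mathcal P)\le\lceil (n+1)/2\rceil$, I would exhibit an explicit family of intervals. The idea is to use ``prefix-type'' intervals to get a binary-search-like separation: for instance take intervals $I_k$ each covering $P_1,\dots,P_{2k}$ (or some nested/staggered variant), so that point $P_i$ belongs to exactly those $I_k$ with $2k\ge i$; distinct points then get distinct (downward-closed) membership sets. One checks that roughly $n/2$ such nested intervals separate all pairs, and one more interval (or a careful choice of the last one) handles covering the last point and the parity, giving $\lceil(n+1)/2\rceil$. I would present the cleanest such construction and verify covering and pairwise separation directly.

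For the lower bound $\id_D(\mathcal P)\ge\lceil(n+1)/2\rceil$, the key combinatorial fact is that each interval $I$ has only two ``boundaries'' along the ordering $P_1,\dots,P_n$: it can separate a pair $(P_i,P_{i+1})$ of consecutive points only if one of its two endpoints falls strictly between them. There are $n-1$ consecutive pairs, all of which must be separated, so with $k$ intervals contributing at most $2$ boundaries each we already get $2k\ge n-1$; the covering requirement and a parity/edge-effect argument (the outermost point $P_1$ or $P_n$ must be covered, which ``uses up'' a boundary on the appropriate side) pushes this to $2k\ge n+1$, i.e. $k\ge\lceil(n+1)/2\rceil$. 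I expect this boundary-counting step — making precise why the covering constraint costs the extra ``$+1$'' rather than just getting $\lceil(n-1)/2\rceil$ — to be the main obstacle, and I would handle it by a small case analysis on whether $n$ is odd or even and on the position of the first/last boundary.

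Combining the two bounds yields $\id_D(\mathcal P)=\lceil(n+1)/2\rceil$, and I note that the argument is purely order-theoretic, so it depends only on $n$ and not on the actual spacing of the colinear points.
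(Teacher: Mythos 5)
Your lower-bound argument is essentially the paper's: count crossings between the line $L$ and the circles bounding the disks. Each of the $n-1$ consecutive pairs forces a crossing strictly between them, and the covering of $x_1$ and of $x_n$ forces one crossing before $x_1$ and one after $x_n$; that is $n+1$ crossings, each circle supplies at most two, so $k\geq\lceil\frac{n+1}{2}\rceil$. No parity case analysis is needed here, and the step you flag as ``the main obstacle'' is exactly this observation that the two covering crossings are disjoint from the $n-1$ separating ones. (One small point: a disk is bounded, so its trace on $L$ is always a bounded interval; your ``degenerate large-radius'' case does not occur.)

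The genuine gap is in the upper bound. The nested prefix construction you commit to does not work: if $I_k$ covers $P_1,\dots,P_{2k}$, then $P_1$ and $P_2$ lie in exactly the same intervals (both belong to $I_k$ iff $k\geq 1$), and more generally $P_{2k-1}$ and $P_{2k}$ are never separated. Your claim that distinct points get distinct downward-closed membership sets is therefore false. The failure is structural, not a matter of tuning: a prefix interval has one of its two endpoints to the left of all the points, so it separates only the single consecutive pair at its right end, and by your own boundary-counting lower bound a family in which each interval separates only one consecutive pair cannot beat roughly $n$ intervals; patching the $\sim n/2$ unseparated pairs afterwards costs another $\sim n/4$ intervals and overshoots the bound. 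The construction that achieves $\lceil\frac{n+1}{2}\rceil$ is a \emph{sliding window}: take the disks $D_{i,i+\lceil n/2\rceil}$ containing exactly the consecutive points $x_i,\dots,x_{i+\lceil n/2\rceil}$ for $i=1,\dots,\lceil\frac{n+1}{2}\rceil$. Here almost every interval has both endpoints strictly inside the point sequence and hence separates two consecutive pairs, which is precisely where the factor $2$ is gained; the windows are wide enough ($\lceil n/2\rceil+1$ points) that their left and right endpoints together sweep all $n+1$ required gaps, and one checks directly that every point is covered and every pair separated. With that replacement your argument matches the theorem.
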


\begin{proof}
  Let $\mathcal{P}$ be a set of $n$ colinear points located on a line $L$. We denote by $x_1,...,x_n$ the points, respecting their order on $L$.

  Let $\mathcal D$ be a set of disks identifying $\mathcal P$.
  For any $i\in\{1,...,n-1\}$, $x_i$ and $x_{i+1}$ are separated by $\mathcal D$. It means that there is a disk $D\in \mathcal D$, such that its perimeter intersects $L$ between $x_i$ and $x_{i+1}$. Moreover, $x_1$ and $x_n$ are covered by $\mathcal D$, thus there is a disk whose perimeter intersects $L$ before $x_1$ and after $x_n$. In total, there are at least $n+1$ intersections between $L$ and some disks' perimeters. Since a circle intersects a line into at most two points, we necessarily have $|\mathcal D|\geq \lceil \frac{n+1}{2}\rceil$.

To prove the equality, note that for any subset of consecutives points $x_i,x_{i+1},...,x_j$ of $\mathcal P$, there exists a disk $D_{i,j}$ such that $D_{i,j}\cap \mathcal P=\{x_i,x_{i+1},...,x_j\}$. Then the set of disks $$\mathcal D=\left\{D_{i,i+\lceil n/2 \rceil}\  |\  i=1,..,\left \lceil \frac{n+1}{2} \right \rceil \right\}$$has size $\lceil \frac{n+1}{2}\rceil$ and is identifying $\mathcal P$. See Figure \ref{fig:algo_align} for an illustration with nine points.

\end{proof}

\begin{figure}[ht]

\begin{center}
\scalebox{0.8}{
\begin{tikzpicture}

\foreach \pos in {(0,0),(1,0),(3,0),(3.5,0),(5,0),(6,0),(7.5,0),(8.5,0),(9,0)}{
\node[noeud] at \pos {};}

\draw (0,0) node[above=2pt]{$x_1$};
\draw (1,0) node[above=2pt]{$x_2$};
\draw (3,0) node[above=2pt]{$x_3$};
\draw (3.5,0) node[above=2pt]{$x_4$};
\draw (5,0) node[above=2pt]{$x_5$};
\draw (6,0) node[above=2pt]{$x_6$};
\draw (7.5,0) node[above=2pt]{$x_7$};
\draw (8.5,0) node[above=2pt]{$x_8$};
\draw (9,0) node[above=2pt]{$x_9$};

\draw (2.5,0) circle (3cm);
\draw (3.5,0) circle (3cm);
\draw (5.25,0) circle (2.75cm);
\draw (6,0) circle (2.75cm);
\draw (7,0) circle (2.5cm);
\end{tikzpicture}
}
\end{center}
\caption{Identification of colinear points}
\label{fig:algo_align}
\end{figure}
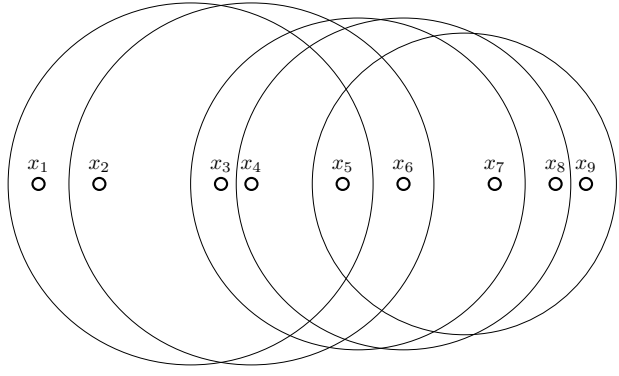

In the solution given in the proof of Theorem \ref{th:align}, some disks might have a big radius. Actually, if the radius of the disks is bounded by a constant $r$, $n$ disks are sometimes needed and $\id_{D,r}(\mathcal P)$ can take any value between $\lceil \frac{n+1}{2} \rceil$ and $n$. In Section \ref{sec:complexity}, we give an algorithm that computes  $\id_{D,r}(\mathcal P)$ in linear time.

\subsection{Points located on a grid}

We now consider points located on a regular grid. Given two integers $m$ and $n$, we denote by $\mathcal P_{m,n}$ the set of points $(x,y)$ of $\mathbb Z^2$ such that $1\leq y\leq m$ and $1\leq x \leq n$.

\subsubsection{Grids of height 2}

When the grid contains only two lines, one can identify the points using the same number of disks than on a single line, except in few cases:

\begin{theorem}
  Let $n\geq 2$ be an integer. We have:
  \begin{equation*}
  \id_{D}(\mathcal P_{2,n})=\begin{cases}
    \lceil \frac{n+1}{2} \rceil+1 & \text{if} \  n\in\{2,3,4,5,7\},\\
    \lceil \frac{n+1}{2} \rceil & \text{otherwise}.
  \end{cases}
\end{equation*}
\end{theorem}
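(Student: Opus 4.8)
The plan is to prove the two directions of each case separately. For the lower bound, I would generalize the line-intersection argument from Theorem~\ref{th:align}. Let the two rows be $L_1=\{(i,1)\mid 1\le i\le n\}$ and $L_2=\{(i,2)\mid 1\le i\le n\}$. A disk's boundary is a circle, which meets each of the two horizontal lines in at most two points, so a disk can contribute at most four "crossings" in total; moreover a disk separating a horizontal pair $(i,1),(i+1,1)$ forces a crossing on $L_1$, and similarly for $L_2$. Counting the crossings needed to cover the endpoints of each row and separate consecutive points in each row gives $|\mathcal D|\ge \lceil (2n+2)/4\rceil = \lceil (n+1)/2\rceil$, which is the generic bound. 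To get the extra $+1$ for the small cases $n\in\{2,3,4,5,7\}$, I expect one must argue more carefully: a single disk achieving four crossings (two on each line) is forced into a very constrained geometric position (the circle must cross both parallel lines twice, so its center is within vertical distance $<r$ of both, and the two chords are "aligned"), and one shows that you cannot simultaneously have enough such efficient disks and also separate the vertical pairs $(i,1),(i,2)$. This case analysis — showing the four-crossings-per-disk bound is not simultaneously attainable everywhere for these specific small $n$ — is the main obstacle; it is essentially a finite but delicate geometric packing argument, possibly assisted by checking that $\lceil(n+1)/2\rceil$ disks would have to each be "perfectly efficient" and deriving a contradiction with the vertical separations or with covering the four corner points.

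For the upper bound in the generic case $n\notin\{2,3,4,5,7\}$, I would exhibit an explicit family of $\lceil(n+1)/2\rceil$ disks. The natural idea: mimic the colinear construction but use disks large enough (or tilted/positioned) so that each disk's boundary slices off a "staircase" block of consecutive columns from row $1$ and a shifted block from row $2$, so that the column-interval of coverage differs between the two rows. Concretely, for a disk with a large radius whose center is placed below and to the side of the grid, the boundary crosses $L_1$ and $L_2$ at different $x$-coordinates; choosing a sequence of such disks sweeping across the grid, with the two crossing points staggered, lets the "row-1 fingerprint" and "row-2 fingerprint" of each point differ, so that $(i,1)$ and $(i,2)$ are separated for every $i$, while consecutive points within a row are separated as in the 1D case. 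One must verify coverage of all $2n$ points and that $n$ large enough leaves room to stagger the crossings — which is exactly why small $n$ fail. I would present this construction with a figure analogous to Figure~\ref{fig:algo_align}, and verify the fingerprints are pairwise distinct by a short combinatorial check.

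For the upper bound in the exceptional cases, I would simply give explicit small configurations of $\lceil(n+1)/2\rceil+1$ disks for $n\in\{2,3,4,5,7\}$ (each is a tiny finite picture), and note these match the lower bound proved above. The overall structure is therefore: (i) prove the general lower bound $\lceil(n+1)/2\rceil$ by the crossing count; (ii) strengthen it to $+1$ for the five exceptional values via a geometric impossibility argument; (iii) give the sweeping-disks construction achieving $\lceil(n+1)/2\rceil$ for all other $n$; (iv) give explicit optimal constructions for the five small cases. I anticipate step (ii) — the lower-bound strengthening for the exceptional $n$ — to be where essentially all the real work lies, since steps (i), (iii), (iv) are routine counting and explicit exhibition.
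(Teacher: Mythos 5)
Your step (i) is fine (and can be obtained even more cheaply from Lemma~\ref{lem:inclus} plus Theorem~\ref{th:align}, since one row of the grid is already a set of $n$ colinear points), and step (iv) is indeed routine. But the two places you wave at are genuine gaps, and the second one is not where you think the difficulty lies. For the strengthened lower bound on the exceptional values, the ``delicate geometric packing argument'' you defer to is never carried out, and it is not the natural route. The paper disposes of $n\in\{2,3,4,5\}$ by pure counting: $k$ disks yield at most $2^k-1$ distinct nonempty signatures (Lemma~\ref{lem:logbound}), and for these $n$ the value $\lceil(n+1)/2\rceil$ gives too few subsets for $2n$ points. The case $n=7$ is the one your crossing count cannot reach: $4$ disks do provide $15\ge 14$ subsets and can in principle achieve all $16$ required line crossings, so neither your argument nor the log bound rules out $4$ disks. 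What does rule it out is Proposition~\ref{prop:lowerbound}: an arrangement of $4$ circles has at most $4^2-4+1=13$ bounded faces, while $14$ covered points must lie in $14$ distinct nonempty cells. Without that arrangement-counting lemma your step (ii) does not close.

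Step (iii) is also not routine, because your ``stagger the two crossing points freely'' picture is geometrically false. A disk meeting both horizontal lines cuts out intervals $[a,b]$ on one row and $[c,d]$ on the other with one interval nested in the other and the two overhangs balanced: $|(c-a)-(b-d)|\le 1$ (the two chords of a circle on parallel lines are concentric over the center's abscissa). So you cannot shift the row-$1$ and row-$2$ fingerprints independently; in particular a one-parameter sweeping family of large disks centered below the grid gives nearly symmetric nested intervals and will generically fail to separate the vertical pairs $(i,1),(i,2)$. This constraint is exactly why the paper needs a characterization of admissible disk traces, explicit ad hoc solutions for $n=6$ and $n=9$, and a three-stage construction (two pairs of huge disks whose symmetric differences separate the rows, two small disks fixing the columns $p+1,2p+1,3p+1$, then a family of concentric disks) for $n=4p\pm1$ with $p\ge 3$. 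As written, your construction would need to be replaced or substantially reworked to respect the balanced-overhang condition; the claim that coverage and pairwise-distinct fingerprints follow from ``a short combinatorial check'' is not justified.
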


\begin{proof}

We can first see that for all $n$, $\mathcal P_{2,n} \leq \lceil \frac{n+1}{2} \rceil+1$. Indeed, to identify $\mathcal{P}_{2,n}$ one can use the method proposed in Theorem~\ref{th:align} and add an half-plane (which can be seen as a very large disk) to separate the lines as in Figure~\ref{fig:2lines}.

\begin{figure}[ht]
\begin{center}
\scalebox{0.8}{
\begin{tikzpicture}
\foreach \pos in {(0,0),(1,0),(2,0),(3,0),(4,0),(5,0),(6,0),(0,1),(1,1),(2,1),(3,1),(4,1),(5,1),(6,1)}{
\node[noeud] at \pos {};}
\draw (1.5,0.5) circle (2cm);
\draw (2.5,0.5) circle (2cm);
\draw (3.5,0.5) circle (2cm);
\draw (4.5,0.5) circle (2cm);
\draw (-1,0.5) -- (7,0.5);
\end{tikzpicture}
}
\end{center}
\caption{Identification of $\mathcal{P}_{2,n}$ with $\lceil \frac{n+1}{2} \rceil+1$ disks}
\label{fig:2lines}
\end{figure}
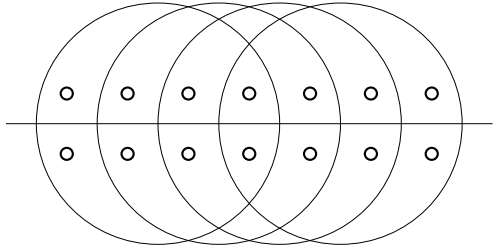

For grids $\mathcal{P}_{2,n}$ with $n\leq 5$, this solution is optimal by Lemma~\ref{lem:logbound}. In Section~\ref{sec:bounds} Proposition~\ref{prop:lowerbound}, we show that at least five disks are needed to identify a set of 14 points, so there is no better solution for $\mathcal{P}_{2,7}$. For all the other cases, we show that we only need $\lceil\frac{n+1}{2}\rceil$ disks.
We only have to study the case where $n$ is odd or equal to 6. Indeed, by Lemma~\ref{lem:inclus}, solution for $P_{2,2q+1}$ is also a solution for $P_{2,2q}$ by removing the points of the last column.

We first give a characterization for a set $X \subseteq \mathcal{P}_{2,n}$ to be the intersection of $\mathcal{P}_{2,n}$ and a disk.
Let $X$ be such a set, $X$ is the union of two sets of consecutive points of the first line $(a,1)$, ... , $(b,1)$ and of the second line $(c,2)$, ... , $(d,2)$, with $a,b,c,d \in \mathbb{N}$, $a \leq b$ and $c \leq d$. We must have either $[a,b] \subseteq [c,d]$ or $[c,d] \subseteq [a,b]$ and the difference between each extremities must differ of at most 1: $|(c-a)-(b-d)| \leq 1$. 
This condition is sufficient since for every $a,b,c,d$ verifying this relation, there exist a disk $\Dabcd{a}{b}{c}{d}$ that contains exactly these consecutive points.

\medskip

An explicit solution for the grids $P_{2,6}$, and $P_{2,9}$ are the following disks :

\begin{itemize}
\item
$P_{2,6}$ can be identified by the set of disks : $\Dabcd{1}{5}{3}{4}$, $\Dabcd{3}{6}{4}{5}$, $\Dabcd{2}{3}{1}{5}$ and $\Dabcd{4}{4}{2}{6}$.

\item  $P_{2,9}$ can be identified by the set of disks : $\Dabcd{1}{6}{3}{4}$, $\Dabcd{2}{9}{4}{6}$, $\Dabcd{4}{8}{6}{7}$, $\Dabcd{3}{4}{1}{8}$ and $\Dabcd{6}{7}{2}{9}$.
\end{itemize}

We now give a solution for grids $\mathcal{P}_{2,4p+1}$, with $p\geq 3$.
This solution use three different steps.
Figure~\ref{fig:grids_2*n} gives an illustration of these three steps.

The first step is to use the disks $\mathcal{D}_1 = \Dabcd{1}{3p+1}{p+2}{2p}$, $\mathcal{D}_2 = \Dabcd{p+2}{2p}{1}{3p+1}$, $\mathcal{D}_3 = \Dabcd{p+1}{4p+1}{2p+2}{3p}$ and $\mathcal{D}_4~=~\Dabcd{2p+2}{3p}{p+1}{4p+1}$.
After adding these disks, the points of each line are separated from the other line. Indeed, the points of the first line in the intervals $[1,p+1]$ and $[2p+1,3p+1]$ are in the disk $\mathcal{D}_1$ and are not in the disk $\mathcal{D}_2$ which separate them from all the points of the second line. Similarly the points of the first line and in the intervals $[p+2,2p]$ and $[3p+2,4p+1]$ are in the disk $\mathcal{D}_3$ and are not in the disk $\mathcal{D}_4$, which separate them from all the points of the second line.

In the second step, we add the disks $\Dabcd{p}{p+2}{p}{p+2}$ and $\Dabcd{3p}{3p+2}{3p}{3p+2}$. These two disks separate the points on the columns $p+1$, $2p+1$ and $3p+1$, which weren't until now.

After this, all the points are covered by at least one disk and the points that are no separated from each other are the same line and on the intervals $[1,p-1]$, $[3p+3,4p+1]$, $[p+3,2p]$ and $[2p+2,3p-1]$ (the last two intervals occurs if $p\geq 4$).

In the third step, we can now finish identifying the points by adding the following concentric disks :

$\Dabcd{2}{4p}{2}{4p}$, $\Dabcd{3}{4p-1}{3}{4p-1}$, ... , $\Dabcd{p-1}{3p+3}{p-1}{3p+3}$, $\Dabcd{p+4}{3p-2}{p+4}{3p+2}$, $\Dabcd{p+5}{3p-3}{p+5}{3p-3}$, ... , $\Dabcd{2p}{2p+2}{2p}{2p+2}$.

We use four disks in the first part, then two disks and finally $(p-2)+(p-3)$. So in total we use $2p+1$ disks, which is equal to $\frac{(4p+1)+1}{2}$ disks.

For the grids $\mathcal{P}_{2,4p-1}$, we can remove the points of the columns 1 and $4p+1$ and the disk $\Dabcd{2}{4p}{2}{4p}$.

So we can indeed identify the grids $\mathcal{P}_{2,n}$, with $n\geq 10$ with $\frac{n+1}{2}$ disks.

\end{proof}

\begin{figure}[ht]
\centering

\scalebox{0.8}{

\begin{tikzpicture}

\clip (0,0) rectangle (18,3);

\draw (7,-16) circle (18.03);
\draw (7,19) circle (18.03);
\draw (11,-16) circle (18.03);
\draw (11,19) circle (18.03);

\draw (0.25,1) node {$\mathcal{D}_1$};
\draw (0.25,2) node {$\mathcal{D}_2$};
\draw (17.75,1) node {$\mathcal{D}_3$};
\draw (17.75,2) node {$\mathcal{D}_4$};

\draw (1,0.5) node {$1$};
\draw (5,0.5) node {$p+1$};
\draw (9,0.5) node {$2p+1$};
\draw (13,0.5) node {$3p+1$};
\draw (17,0.5) node {$4p+1$};

\foreach \x in {1,...,17}
{
	\foreach \y in {1,2}
	{
		\node[noeud] at (\x,\y) {};
	}
}

\end{tikzpicture}
}

\vspace{0.25cm}

\scalebox{0.8}{
\begin{tikzpicture}

\clip (0,0) rectangle (18,3);

\draw (7,-16) circle (18.03);
\draw (7,19) circle (18.03);
\draw (11,-16) circle (18.03);
\draw (11,19) circle (18.03);

\draw (5,1.5) circle (1.12);
\draw (13,1.5) circle (1.12);

\foreach \x in {1,...,17}
{
	\foreach \y in {1,2}
	{
		\node[noeud] at (\x,\y) {};
	}
}

\end{tikzpicture}
}

\vspace{0.25cm}

\scalebox{0.8}{

\begin{tikzpicture}

\clip (0,0) rectangle (18,3);

\draw (7,-16) circle (18.03);
\draw (7,19) circle (18.03);
\draw (11,-16) circle (18.03);
\draw (11,19) circle (18.03);

\draw (5,1.5) circle (1.12);
\draw (13,1.5) circle (1.12);

\draw (9,1.5) circle (1.12);
\draw (9,1.5) circle (6.03);
\draw (9,1.5) circle (7.02);

\foreach \x in {1,...,17}
{
	\foreach \y in {1,2}
	{
		\node[noeud] at (\x,\y) {};
	}
}

\end{tikzpicture}
}
\caption{The three steps of the proof the grid $\mathcal{P}_{2,4p+1}$}
\label{fig:grids_2*n}
\end{figure}
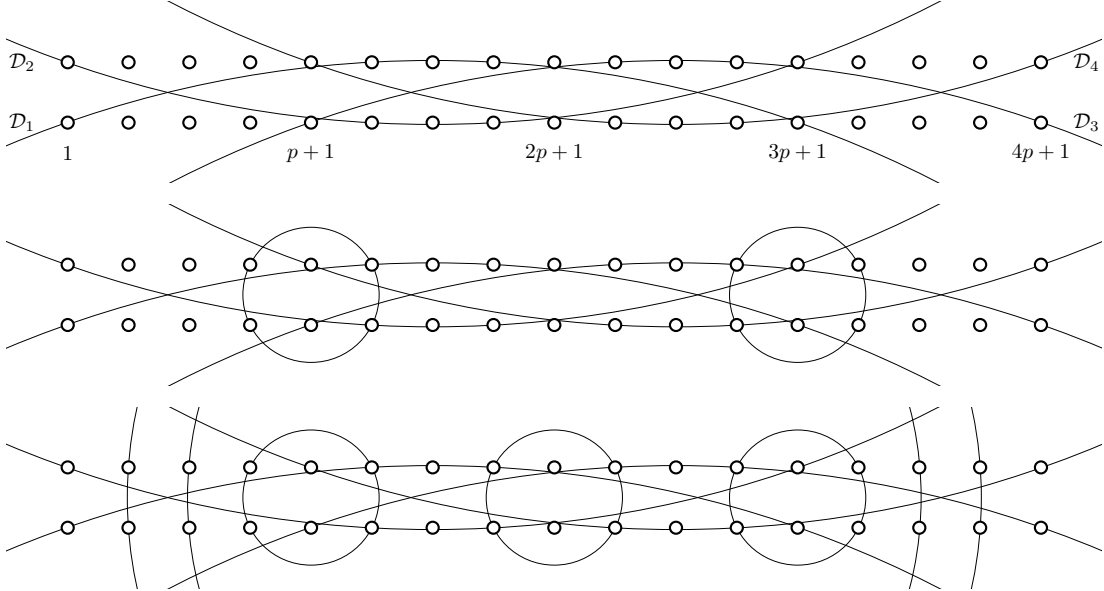

\subsubsection{General case}

We now consider the general case of grids $m \times n$, $n\geq m\geq 3$. We first solve the case of identification with half-planes - which can be considered as disks with infinite radius.

\begin{theorem}
Let $m,n\geq 3$ be two integers. Then, $\id_{D,\infty}(\mathcal P_{m,n}) = m+n-2$.
\end{theorem}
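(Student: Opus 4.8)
The claim is $\id_{D,\infty}(\mathcal P_{m,n}) = m+n-2$ for $m,n\ge 3$. Since a disk of very large radius separates points exactly as a half-plane does, and a half-plane is determined by an oriented line, I will think of each "disk" as a line $L$ with one side marked, covering the points on the marked side. So I must choose a family of oriented lines so that (i) every grid point lies on the marked side of some line, and (ii) every pair of grid points is separated by some line, i.e. lies on opposite sides of it (on a line counts as either side, but to be safe one perturbs so that no grid point lies exactly on a chosen line — one can always tilt slightly).

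**Upper bound $m+n-2$.** I would exhibit an explicit family. Take the $m-1$ horizontal "gap lines" $y = k+\tfrac12$ for $k=1,\dots,m-1$ and the $n-1$ vertical gap lines $x=k+\tfrac12$ for $k=1,\dots,n-1$; that is $m+n-2$ lines. Orient them consistently, say each horizontal line has "below" marked and each vertical line has "left" marked — but then the top-right point is covered by nothing, so instead orient the top horizontal line $y=m-\tfrac12$ with "above" marked (or equivalently pick orientations so the union of marked sides is everything; a clean choice: horizontal line $y=k+\tfrac12$ marks the side containing row $1$, vertical line $x=k+\tfrac12$ marks the side containing column $1$, and additionally let the extreme line $y=\tfrac32$... ). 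The cleanest is: use the $n-1$ vertical lines $x=k+\tfrac12$ each marking the half-plane $x<k+\tfrac12$, and the $m-1$ horizontal lines $y=k+\tfrac12$ each marking $y<k+\tfrac12$, EXCEPT reverse one line, say the vertical $x=\tfrac32$, to mark $x>\tfrac32$. Then a point $(a,b)$ is covered: if $a\ge 2$ it is on the marked side of $x=\tfrac32$ (reversed), and if $a=1$ with $b\le m-1$ it is under $y=m-\tfrac12$, and the single point $(1,m)$ — handle by reversing the horizontal line $y=\tfrac32$ instead/in addition, adjusting count parity; I would just check the small bookkeeping so that all $mn$ points are covered with exactly $m+n-2$ lines. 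Separation is immediate: two points in different rows are separated by a horizontal gap line between them, and two points in the same row different columns by a vertical gap line; the reversals do not affect which pairs get separated. So $\id_{D,\infty}(\mathcal P_{m,n})\le m+n-2$.

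**Lower bound $m+n-2$.** This is the substantive half. Let $\mathcal L$ be an identifying family of oriented lines (half-planes). Consider the $(m-1)+(n-1)$ "grid gaps": between consecutive rows $k,k+1$ and between consecutive columns $k,k+1$. For each horizontal gap I pick a horizontally-adjacent... no — I will choose, for each row gap, the pair $((1,k),(1,k+1))$ (leftmost column), and for each column gap the pair $((k,1),(k+1,1))$ (bottom row), giving $m+n-2$ pairs of points, all lying on the bottom row and left column. A line separating $((1,k),(1,k+1))$ must cross the segment between these two vertical neighbours; a line separating $((k,1),(k+1,1))$ must cross the segment between these two horizontal neighbours. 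I would argue that a single line can cross at most... well, a line crossing a vertical segment of the left column and a horizontal segment of the bottom row is possible, so that alone gives only $\ge \lceil (m+n-2)/?\rceil$. The right argument: project. A line $\ell$ partitions the plane; restricted to the bottom row (the $n$ points $(1,1),\dots,(n,1)$ along a line) it separates them into a prefix and suffix, so $\ell$ "accounts for" separating at most the one column-gap where it crosses the bottom row — hence we need $\ge n-1$ lines just for the bottom-row gaps? No: a prefix/suffix split of $n$ collinear points is cut at exactly one gap, so each line separates at most one of the $n-1$ consecutive bottom-row pairs, forcing $\ge n-1$ lines meeting the bottom row "between" points; symmetrically $\ge m-1$ lines cutting the left column; and a line cutting a bottom-row gap cuts the bottom row hence (being a straight line, hitting the left column at most once and on the far side) cannot also cut an interior left-column gap — **this disjointness is the key step and the main obstacle**: I must show the set of lines responsible for bottom-row gaps and those for left-column gaps are disjoint (or overlap in a controlled way, e.g. a line could conceivably be "diagonal" through the corner). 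If a line crosses both the bottom-row segment $[(k,1),(k+1,1)]$ and a left-column segment $[(1,j),(1,j+1)]$ with $2\le j$, then since the bottom row and left column meet at $(1,1)$, a straight line hitting the open bottom-row segment at some $x_0\in(k,k+1)\subseteq(1,n)$ and the left-column at $y_0\in(j,j+1)$ with $j\ge 2$... one checks with the corner point $(1,1)$ and, say, $(n,1)$ and $(1,m)$ that such a line would fail to cover or separate some point — more robustly, I would instead just carefully choose which $m+n-2$ pairs to use so that NO single line can separate two of them (a "sunflower-free" / independent family of pairs in the hypergraph sense), e.g. replace the bottom-row column-gap pairs and left-column row-gap pairs by suitably spread-out pairs so the separating-line constraints become pairwise exclusive; then $|\mathcal L|\ge m+n-2$ follows immediately. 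I expect filling in exactly this combinatorial-geometry exclusion to be the only real work; the rest is bookkeeping.
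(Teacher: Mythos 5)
Your upper bound is essentially the paper's: the $m+n-2$ axis-parallel gap lines, with orientations chosen so that the marked half-planes cover everything (with $m-1\ge 2$ horizontal gap lines one can orient one upward and one downward, so the bookkeeping you defer is indeed trivial). The problem is the lower bound, where you have correctly located the obstacle but neither of your two proposed ways around it works.

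First, the disjointness you need is simply false: a line can cut one bottom-row gap \emph{and} one left-column gap. For instance, with the grid points at integer coordinates, the line $x+y=2.5$ meets the bottom row $y=1$ at $x=1.5$ (separating $(1,1)$ from $(2,1)$) and the left column $x=1$ at $y=1.5$ (separating $(1,1)$ from $(1,2)$), so the $\ge(n-1)$ lines and the $\ge(m-1)$ lines may overlap and you only get $\max(m,n)-1$. Second, your fallback --- choosing $m+n-2$ adjacent pairs so that no single line separates two of them --- is impossible even for two pairs: given any two disjoint segments, the line through an interior point of each crosses both, hence separates both pairs of endpoints (perturb if it happens to hit an endpoint). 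So no family of ``pairwise exclusive'' pairs of size greater than one exists, and this route cannot yield the bound. The correct fix is to accept that a line can serve two pairs and to double the supply of pairs: take the $2(m+n-2)$ grid points on the boundary rectangle in cyclic order $x_1,\dots,x_{2(m+n-2)}$. Each cyclically consecutive pair $x_i,x_{i+1}$ spans a unit segment lying on the boundary of the convex hull, and any half-plane separating them has its boundary line crossing that segment; these segments have pairwise disjoint interiors, and a line meets the boundary of a convex polygon in at most two points, so each line accounts for at most two of the $2(m+n-2)$ required crossings, giving $|\mathcal L|\ge m+n-2$. This is exactly the paper's argument, and it is the missing ingredient in your proof.
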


\begin{proof}
  We denote by $x_1,...,x_{2(m+n-2)}$ the points on the convex hull of $\mathcal P_{m,n}$, respecting their order.

  Let $\mathcal L$ be a set of half-planes identifying $\mathcal P_{m,n}$.
  For any $i\in\{1,...,2(m+n-2)\}$, $x_i$ and $x_{i+1}$ are separated by $\mathcal L$ (with $x_{2(m+n-2)+1}$ associated with $x_1$). It means that the there is a half-plane $L\in \mathcal L$ whose boundary line intersects  the convex hull of $\mathcal P_{m,n}$ between $x_i$ and $x_{i+1}$. In total, there are at least $2(m+n-2)$ intersections between the convex hull of $\mathcal{P}_{m,n}$ and boundary lines of some half-planes of $\mathcal L$. Since a line intersects a convex polygon into at most two points, we necessarily have $|\mathcal L|\geq m+n-2$.

 
Consider any $m-1$ vertical lines between adjacent points and $n-1$ horizontal lines between adjacent points (see Figure~\ref{fig:lines_grid} for an example). Then every pair of points is separated by a line. To obtain a solution, one just need to choose half-planes with these lines as boundary and in such a way that every point is covered.  
\end{proof}

\begin{figure}

\begin{center}
\scalebox{1.2}{
\begin{tikzpicture}

\draw (0.5,-0.5) -- (0.5,2.5);
\draw (1.5,-0.5) -- (1.5,2.5);
\draw (0.5,-0.5) -- (0.5,2.5);
\draw (2.5,-0.5) -- (2.5,2.5);
\draw (-0.5,0.5) -- (3.5,0.5);
\draw (-0.5,1.5) -- (3.5,1.5);

\foreach \pos in {(0,0),(0,1),(0,2),(1,0),(1,1),(1,2),(2,0),(2,1),(2,2),(3,0),(3,1),(3,2)}{
\fill[color=white] \pos circle (5pt) ;
\node[noeud] at \pos {};
}
\end{tikzpicture}
}
\end{center}
\caption{Identification of the edges of a grid using half-planes}
\label{fig:lines_grid}
\end{figure}
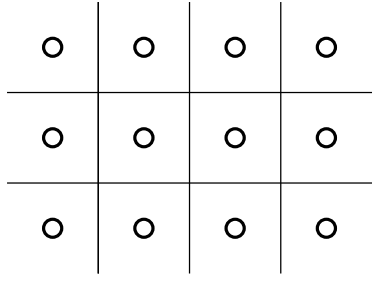

This theorem gives a bound for the general case: $\id_{D}(\mathcal P_{m,n})\leq n+m-2$. This bound is not tight, especially when $n$ is large enough compared to $m$. Next theorem gives a better (but still not tight) bound in this case:

\begin{theorem}
  Let $n$ and $m$ be two integers such that $m \geq 3$ and $n \geq \frac{m^2}{2} -3$.
 Then $\id_{D}(\mathcal P_{n,m}) \leq \lceil \frac{n}{2} \rceil + m-1$.
\end{theorem}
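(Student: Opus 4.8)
The plan is to combine the colinear identification scheme of Theorem~\ref{th:align} with the half-plane separation trick used for grids of height $2$, applied row by row. Think of $\mathcal P_{n,m}$ as $m$ horizontal lines, each containing $n$ points. Within a single row, Theorem~\ref{th:align} tells us that $\lceil (n+1)/2 \rceil$ disks suffice to identify the $n$ points of that row; moreover the disks produced there can be taken to be long and thin, hugging the line, so that when restricted to the whole grid they still only pick up points of that one row. I would first set up such a family $\mathcal D_{\text{row}}$ of $\lceil (n+1)/2 \rceil$ disks that identifies the top row and intersects no other row. Crucially, I would want these disks to separate every pair of points in the top row \emph{and} to guarantee that each column is ``marked'' by a distinct pattern of these disks restricted to the top row — which is exactly what the colinear scheme already gives, since the disks $D_{i,i+\lceil n/2\rceil}$ assign to the $j$-th point of the row the set of indices $i$ with $i \le j \le i+\lceil n/2\rceil$, and these sets are pairwise distinct.

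Next I would handle the rows: using $m-1$ half-planes (large disks) with horizontal boundary lines placed between consecutive rows, I can separate any two points lying in different rows, at the cost of $m-1$ disks. After adding these, two points are unseparated only if they lie in the same row and have the same ``row-signature'' with respect to $\mathcal D_{\text{row}}$. But $\mathcal D_{\text{row}}$ only touches the top row, so in rows $2,\dots,m$ all $n$ points are currently indistinguishable within their row. The key observation is that the $m-1$ horizontal half-planes already distinguish a point in row $k$ from a point in row $k'$; what remains is to distinguish, for each fixed row $k\ge 2$, the $n$ points of that row from one another — and here I reuse the \emph{same} set of column-positions. Concretely, I would take $\lceil n/2\rceil$ additional ``vertical-strip'' disks (or thin disks along a vertical line, or half-planes with near-vertical boundary) $E_1,\dots,E_{\lceil n/2\rceil}$, where $E_i$ contains exactly the points in columns $i,i+1,\dots,i+\lceil n/2\rceil$ across \emph{all} rows; as in Theorem~\ref{th:align}, the column index is then uniquely determined by the pattern $\{i : \text{point} \in E_i\}$. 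Together with the horizontal half-planes, every point's (row, column) pair is now reconstructible, and we are done once coverage is checked. The hypothesis $n \ge m^2/2 - 3$ should be what makes the thin disks $E_i$ geometrically realizable as honest disks (not just half-planes): a disk long enough vertically to span all $m$ rows while cutting the horizontal line between columns $i-1,i$ and between columns $i+\lceil n/2\rceil, i+\lceil n/2\rceil+1$ forces a lower bound on $n$ relative to $m$, and I would verify that $n\ge m^2/2-3$ suffices via the standard chord-versus-sagitta estimate.

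Counting: $\lceil (n+1)/2\rceil$ disks for the top row, $m-1$ horizontal half-planes, and $\lceil n/2\rceil$ vertical disks gives $\lceil (n+1)/2\rceil + \lceil n/2\rceil + m-1$, which is too many. So I would economize: the top-row disks $\mathcal D_{\text{row}}$ are redundant with the $E_i$'s — the vertical disks $E_i$ already distinguish columns in \emph{every} row including the top — so I drop $\mathcal D_{\text{row}}$ entirely and keep only the $\lceil n/2 \rceil$ vertical disks plus the $m-1$ horizontal half-planes, and one extra disk if needed to ensure every point is covered (the vertical disks cover everything if chosen to overlap, so likely no extra disk is needed, but I would check the boundary columns). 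That yields $\lceil n/2\rceil + (m-1) = \lceil n/2\rceil + m - 1$ disks, as claimed. The main obstacle I anticipate is the realizability step: showing that for $n \ge m^2/2-3$ there genuinely exist disks cutting the grid's rows into the prescribed vertical strips without accidentally excluding corner points of the extreme rows (the curvature of a finite-radius disk bulges the boundary), and handling the parity bookkeeping between the $2\lceil n/2\rceil$ intersection points needed and the $\lceil n/2\rceil$ disks — essentially re-running the argument of Theorem~\ref{th:align} but now with chords of a \emph{wider} shape. Once that geometric lemma is in place, the combinatorial part (distinct column-signatures, distinct row-signatures, hence distinct pairs) is routine.
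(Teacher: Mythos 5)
Your final plan (after you discard the redundant top-row disks) coincides with the paper's proof: $m-1$ horizontal half-planes to separate the rows and cover everything, plus $\lceil n/2 \rceil$ large disks whose intersection with the grid is a vertical strip of consecutive columns, the hypothesis $n \geq \frac{m^2}{2}-3$ being exactly the realizability condition you identify (the paper makes it concrete by taking radius $\sqrt{\left(\frac{1}{2}\lceil n/2\rceil\right)^2 + \frac{m^2}{4}}$ centered on the horizontal midline). One correction to your parameters: each strip must contain $\lceil n/2 \rceil$ consecutive columns, not the $\lceil n/2\rceil + 1$ columns $i,\dots,i+\lceil n/2\rceil$ you wrote, because with $\lceil n/2\rceil$ windows of width $\lceil n/2\rceil+1$ at consecutive offsets the two middle columns lie in \emph{every} window and thus receive identical signatures, so two points of the same row in those columns would not be separated; with width $\lceil n/2\rceil$ the column signatures are pairwise distinct and the count $\lceil n/2\rceil + m - 1$ goes through as claimed.
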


\begin{proof}
The idea is to use a method similar to the one described in Figure~\ref{fig:2lines}. We use half-planes to separates the lines and disks to separates the columns. When $n$ is large enough the disks act on each line in the same way they act when there is only one line.

Since $m \geq 3$ we use half-planes to separate the lines and include all the points into a disk: the bottom half-plane includes all the points above itself and the top half-plane includes all the points below itself. 

We now use disks of radius $\sqrt{\left(\frac{1}{2}\lceil \frac{n}{2} \rceil\right)^2 + \frac{m^2}{4}}$ and centered on $(\frac{1}{2}(\lceil \frac{n}{2} \rceil + 1) + k ,m/2)$ with $k$ an integer between 0 and $\lceil \frac{n}{2} \rceil - 1$. Since $n \geq \frac{m^2}{2} -3$, those disks contains $\lceil \frac{n}{2} \rceil$ points on each line and they separates all the columns. An example of such disks can be seen in Figure~\ref{fig:grille_nm}. 
Since all the points are inside a half-plane, there is no need for all the columns to be inside a disk. That is why we only need $\lceil \frac{n}{2} \rceil$ disks instead of $\lceil \frac{n+1}{2} \rceil$ as in the case of one line.

There is $\lceil \frac{n}{2} \rceil$ disks to separate the columns and $m-1$ half-planes to separate the lines, this gives us $\lceil \frac{n}{2} \rceil + m -1$ in total.

\begin{figure}[ht]
\begin{center}
\scalebox{0.8}{
\begin{tikzpicture}
    \foreach \x in {1,...,10}
        \foreach \y in {0,...,3} 
            {\node[noeud] at (\x,\y) {};}
    \foreach \x in {3,...,7}
        {\draw[very thick] (\x,1.5) circle (2.7);}
    \foreach \y in {0.5,...,2.5}
        {\draw[very thick] (-0.5,\y) -- (11,\y);}
\end{tikzpicture}
}
\end{center}
\caption{Example of identification of points of a grid with $n$ sufficiently greater than $m$}
\label{fig:grille_nm}
\end{figure}
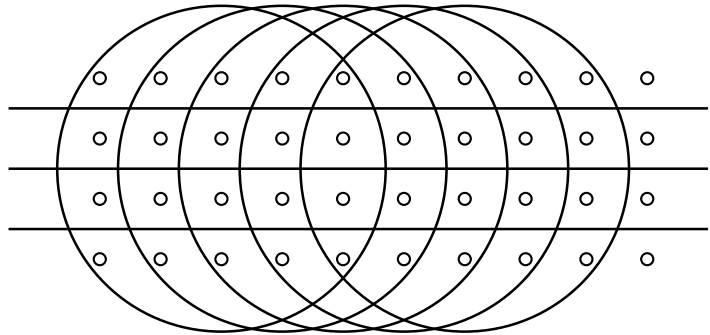

\end{proof}

\section{Extremal cases}\label{sec:bounds}
In this section, we give tight lower and upper bounds on $\id_{D}(\mathcal P)$ using the number of points of $\mathcal P$ when the points are in general configuration.

\subsection{Lower bound}

The logarithmic lower bound  given in Lemma \ref{lem:logbound} is the natural lower bound for identifying codes in hypergraphs. It is tight if any hyperedge is allowed. But if there is some structure on the hyperedges, this is not always true. In particular, if the hyperedge set has bounded dual VC-dimension $d^*$, then the lower bound is at least of order $n^{1/{d^*}}$ \cite{BDFHMP16}. This is the case for our problem since the hypergraph induced by disks have bounded dual VC-dimension equals to 3, leading to a lower bound of order $n^{1/3}$.
However, this bound is still not tight. Indeed, we provide in this section a lower bound of order $n^{1/2}$. This bound comes from the fact that an arrangment of $k$ disks can create at most $k^2-k+1$ inner faces. This classical result can be proved by induction with the argument that each time one add a circle to a set of circles it cross each circle at most twice (see \cite{oeisA014206} for more details and references). Since if some disks are identifying a set of points, there is at most one point in each faces of the intersections of the disks, we have the following bound:

\begin{proposition}\label{prop:lowerbound}
Let $\mathcal{P}$ be a set of $n$ points of $\mathbb R^2$. Then,
$\id_{D}(\mathcal P) \geq \left\lceil \frac{1+\sqrt{1+4(n-1)}}{2} \right\rceil.$ This bound is tight.
\end{proposition}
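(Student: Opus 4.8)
The plan is to prove the two halves of the statement separately: first the lower bound $\id_D(\mathcal P) \geq \lceil (1+\sqrt{1+4(n-1)})/2 \rceil$ for every $n$-point set, then the tightness, i.e.\ the existence for each $k$ of a set of $n = k^2-k+1$ points requiring exactly $k$ disks.

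For the lower bound, let $\mathcal D$ be an identifying set of $k=|\mathcal D|$ disks. The key geometric input, already quoted in the text, is that an arrangement of $k$ circles partitions the plane into at most $k^2-k+1$ bounded faces (cells of the arrangement lying inside at least one disk). I would first make precise the notion of ``cell'' of the arrangement: two points lie in the same cell iff they are contained in exactly the same subset of $\mathcal D$. Since $\mathcal D$ is identifying, every point of $\mathcal P$ is covered (so it lies in a cell corresponding to a nonempty subset), and no two points of $\mathcal P$ share a cell; hence $n \leq k^2-k+1$. Solving $k^2-k+1 \geq n$, i.e.\ $k^2 - k - (n-1) \geq 0$, for the positive root gives $k \geq (1+\sqrt{1+4(n-1)})/2$, and since $k$ is an integer we get $k \geq \lceil (1+\sqrt{1+4(n-1)})/2 \rceil$. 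The induction establishing the $k^2-k+1$ bound on faces (each new circle meets the previous ones in at most $2(k-1)$ points, hence is split into at most $2(k-1)$ arcs, each adding at most one new bounded face, plus one for the first circle's interior) can be cited or sketched in a sentence.

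For tightness, I need, for each $k \geq 1$, a configuration of $n = k^2-k+1$ points that is identified by exactly $k$ disks; combined with the lower bound this shows the bound cannot be improved. The natural idea is to take $k$ disks in ``general position'' so that their arrangement actually realizes all $k^2-k+1$ bounded faces (every pair of circles crosses twice, no three circles concurrent), and then place one point of $\mathcal P$ in each bounded face. Then those $k$ disks identify $\mathcal P$ by construction, so $\id_D(\mathcal P) \leq k$, and by the lower bound $\id_D(\mathcal P) = k$. I would briefly justify that such an arrangement of $k$ circles with all $k^2-k+1$ faces nonempty exists — for instance $k$ unit circles with centers at $k$ points in convex position and sufficiently close together, or an explicit symmetric family — and note that each bounded face, being a nonempty open region, contains a point we may add to $\mathcal P$.

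The main obstacle is the tightness direction, specifically exhibiting a circle arrangement that simultaneously (a) has all $k^2-k+1$ bounded faces nonempty and (b) has every bounded face lying inside at least one of the $k$ disks, so that adding one point per face yields a set all of whose points are genuinely covered. Property (b) is automatic since every bounded face of a circle arrangement is by definition enclosed by arcs and lies in the interior of at least one disk, but I would want to double-check this and make sure the count $k^2-k+1$ refers to exactly the faces contained in $\bigcup \mathcal D$ rather than all bounded faces of the line-arrangement-style complement; for circles these coincide. A clean way to avoid fiddly case analysis is to invoke the known extremal configuration (e.g.\ the one in the OEIS reference A014206, or concentric-ish perturbations) rather than constructing coordinates by hand, and simply verify the identifying property abstractly from the cell description.
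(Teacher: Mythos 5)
Your proposal is correct and follows essentially the same route as the paper: bound the number of nonempty cells of an arrangement of $k$ disks by $k^2-k+1$, solve the resulting quadratic for $k$, and for tightness realize a maximal arrangement (the paper uses disks of radius $1+\epsilon$ centered on the vertices of a regular $k$-gon inscribed in a unit circle) and place one point in each face. One caveat: your parenthetical claim that every bounded face of a circle arrangement automatically lies inside some disk is false in general (three circles can bound a central ``hole'' outside all three disks), but this does not damage your argument, since the lower bound only needs to count cells with a nonempty label, and your close-centers construction for tightness has no such hole.
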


To obtain a set $\mathcal P$ of $n$ points reaching the bound, one can use an arrangment of $k$ disks making $k^2-k+1$ inner faces and set one point in each face. Such an arrangment can be obtained by taking disks of radius $1+\epsilon$, centered on vertices of a $k$-regular polygon that is inscribed in a circle of radius 1. See Figure~\ref{zone_max} for a construction with k=5.

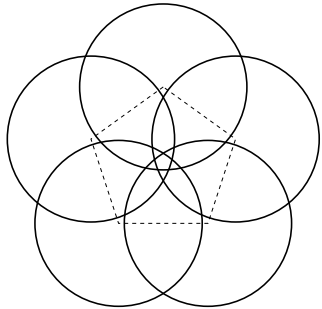
\begin{figure}[ht]
\centering
\scalebox{0.5}{
    \begin{tikzpicture}
      \draw[very thick](90:2) circle (2.2cm);
      \draw[very thick](162:2) circle (2.2cm);
      \draw[very thick](234:2) circle (2.2cm);
      \draw[very thick](306:2) circle (2.2cm);
      \draw[very thick](18:2) circle (2.2cm);
      \draw[dashed] (90:2)--(162:2)--(234:2)--(306:2)--(18:2)--(90:2);
    \end{tikzpicture}
    }
\caption{Five disks creating $21=5^2-5+1$ inner faces.}
\label{zone_max}
\end{figure}

Since an identifying code of a unit disk graph can be seen as an identifying set of special disks, the lower bound of Property~\ref{prop:lowerbound} is still true for identifying codes in unit disk graphs, improving the bound given in~\cite{BLLPT15}. Moreover, this bound is also tight for this case. Indeed, the construction of Property \ref{prop:lowerbound} can be adapted for identifying codes of unit disk graphs since all the disks have the same radius and their center can be points.

\subsection{Upper bound}
We now consider the worse configurations of points. Otherwise said, what is the minimum number of disks that is enough to identify any set of $n$ points? This question has already been solved by Gerbner and T\'oth when one just wants to separate points \cite{GT13}. They prove that $\lceil n/2 \rceil$ disks are always enough and that this is the best value one can obtain since there are point sets needing this number of disks. Since one more disk is enough to obtain an identifying set, it gives us the bound $\lceil n/2 \rceil+1$. Actually, we can slightly improve it by noticing that in the proof of Gerbner and T\'oth~\cite{GT13}, all the points are covered if there is an odd number of points. For the sake of completeness we give the proof, it follows the one of \cite{GT13}.

\begin{proposition}\label{thm:upperbound}
Let $\mathcal{P}$ be a set of $n$ points of $\mathbb R^2$. Then,
$\id_{D}(\mathcal P) \leq \lceil \frac{n+1}{2} \rceil$. This bound is tight.
\end{proposition}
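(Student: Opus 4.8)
The plan is to derive the upper bound from the separating--family bound of Gerbner and T\'oth~\cite{GT13}, together with a small refinement of it, and to read off tightness from the colinear case. Tightness is immediate: by Theorem~\ref{th:align}, any set of $n$ colinear points needs exactly $\lceil\frac{n+1}{2}\rceil$ disks, so the bound cannot be lowered in general.

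For the upper bound I would first establish the claim $(\star)$: every $n$-point set $\mathcal P$ admits a separating family of at most $\lceil n/2\rceil$ disks, which moreover can be chosen so as to cover every point of $\mathcal P$ when $n$ is odd. Granting $(\star)$, the theorem follows at once. If $n$ is odd, then $\lceil n/2\rceil=\lceil\frac{n+1}{2}\rceil$ and the family of $(\star)$ is already identifying. If $n$ is even, take a separating family of $n/2$ disks; as observed in Section~\ref{sec:defi}, a separating family of disks leaves at most one point uncovered, so adding one disk containing exactly that point (or any single point, if none is missing) yields an identifying family of $n/2+1=\lceil\frac{n+1}{2}\rceil$ disks.

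It remains to prove $(\star)$, which is the Gerbner--T\'oth theorem; I would argue by induction on $n$, peeling off extreme points. The cases $n\le 2$ are trivial. After a generic rotation of the axes, assume the points of $\mathcal P$ have distinct first coordinates. If $n$ is even, let $p,q$ be the two points of smallest first coordinate; the half-plane lying to their left (a sufficiently large disk) is a disk $C$ with $C\cap\mathcal P=\{p,q\}$, which therefore separates $\{p,q\}$ from every other point. Applying the induction hypothesis to $\mathcal P\setminus\{p,q\}$ gives a separating family $\mathcal S$ of $(n-2)/2$ disks, and $\mathcal S\cup\{C\}$ uses $n/2$ disks and separates every pair of $\mathcal P$ except $\{p,q\}$ itself. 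If $n$ is odd, let $p$ be the leftmost point and $H$ the half-plane to its left, so $H\cap\mathcal P=\{p\}$; recursing on $\mathcal P\setminus\{p\}$ via the even case gives $(n-1)/2+1=\lceil n/2\rceil$ disks in total.

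The main obstacle is closing the remaining gap without paying an extra disk: in the even case $p$ and $q$ must still be separated from each other, and in the odd case one must guarantee that \emph{every} point --- not just all but one --- is covered. I would deal with this by strengthening the induction hypothesis and exploiting the extremality of $p$: one should be able to choose the sub-family $\mathcal S$ and a member $S\in\mathcal S$ so that $S$ can be enlarged to a disk $S'$ with $S'\cap\mathcal P=(S\cap\mathcal P)\cup\{p\}$ and $q\notin S'$ --- possible because $p$ is an extreme point, reachable by bulging $S$ outward in a suitable leftward direction that misses $q$ --- so that replacing $S$ by $S'$ preserves all earlier separations while adding the separation of $p$ from $q$; making this compatible with the ``all points covered when $n$ is odd'' invariant (by letting the one possibly uncovered point of the even sub-instance be one that $H$, or an enlarged disk, already reaches) is the delicate bookkeeping, and everything else is routine. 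For completeness one may also note that when $\mathcal P$ is in convex position $(\star)$ follows directly from the construction of Theorem~\ref{th:align} applied along the cyclic order of $\mathcal P$, since every contiguous arc of a convex polygon is cut off by a half-plane.
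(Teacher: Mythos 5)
Your reduction of the proposition to the claim $(\star)$ is sound, and your tightness argument via Theorem~\ref{th:align} matches the paper. But the proof of $(\star)$ itself --- which is the entire mathematical content here --- has a genuine gap at exactly the point you flag as ``delicate bookkeeping.'' After peeling off the two leftmost points $p,q$ with a half-plane $C$ satisfying $C\cap\mathcal P=\{p,q\}$, you still must separate $p$ from $q$, and your proposed repair is to enlarge some disk $S$ of the inductively obtained sub-family to a disk $S'$ with $S'\cap\mathcal P=(S\cap\mathcal P)\cup\{p\}$ and $q\notin S'$. Such a disk need not exist: if $T=S\cap\mathcal P$ is a set of points far to the right and other points of $\mathcal P$ lie between $T$ and $p$ (say $T$ consists of the rightmost point while many points sit in the convex hull of $T\cup\{p\}$), then every disk containing $T\cup\{p\}$ necessarily swallows those intermediate points, destroying the separations that $S$ was providing. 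Extremality of $p$ gives you a disk containing $p$ alone, or $p$ together with its neighbours on the hull, but it does not let you adjoin $p$ to an arbitrary prescribed trace $T$. The same difficulty infects the ``all points covered when $n$ is odd'' invariant. So the induction as stated does not close.

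The paper (following Gerbner and T\'oth~\cite{GT13}) sidesteps this with a different pairing device that you should adopt: fix once and for all a vertical line $L$ splitting $\mathcal P$ into two halves of sizes differing by at most one, and put the half-plane $D$ bounded by $L$ (containing the larger half) into the family. At each step, take the edge $(x,y)$ of the convex hull of the remaining points that crosses $L$ highest; since no remaining point lies above it, a disk $D_{x,y}$ cuts off exactly $\{x,y\}$. The point is that $x$ and $y$ lie on \emph{opposite} sides of $L$, so the fixed half-plane $D$ separates them --- no disk of the family ever needs to be modified a posteriori. This also handles coverage for free: each removed pair is covered by its own disk, and when $n$ is odd the single leftover point lies on the larger side, hence inside $D$, and is the unique point contained only in $D$. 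Total: $1+\lfloor n/2\rfloor=\lceil\frac{n+1}{2}\rceil$ disks. Your even/odd accounting on top of $(\star)$ would then be unnecessary, since this construction directly produces an identifying family of the right size.
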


\begin{proof}

  Since $\mathcal{P}$ is finite there exists a direction we can choose as abscissa, such that no pair of points have the same abscissa. Then there is a line $L$ of constant abscissa which separates the points in two parts of the same size up to one. Let $D$ be the half-plane defined by $L$ and containing the biggest part of points as one of the disks. We choose a direction perpendicular to the abscissa as the ordinate.

  At first, all the points are part of a set $P'$ and the set of identifying disks contains only $D$. Then we repeat the following operation $\lfloor \frac{n}{2} \rfloor$ times. Consider the convex hull of $P'$, exactly two of its edges cross $L$. Let $(x,y)$ be the edge which intersects $L$ on the largest ordinate. Since it is an edge of the convex hull, there is no point of $P'$ with a larger ordinate than $x$ and $y$. Therefore there is a disk $D_{x,y}$ that contains only $x$ and $y$ among the points of $\mathcal{P}'$. Add this disk to the set of identifying disks and remove $x$ and $y$ from $P'$. Iterate the process.

This algorithm gives a set of disks that identifies $P$ of size $1+\lfloor \frac{n}{2} \rfloor=\lceil \frac{n+1}{2} \rceil$. Indeed, at each step, $x$ and $y$ are separated from all the other points of $P'$ by $D_{x,y}$ and since all the other points have been considered previously they are also separated from them. Moreover, since $(x,y)$ is an edge that crosses $L$, $x$ and $y$ are separated from each other by $D$. At the end, if there is an odd number of points, there is a point that is only in $D$, since this is the only point that is only in $D$, it is identified.

Figure~\ref{fig:bh_gen} illustrates some steps of the algorithm. This bound is tight when all the points are colinear (see Theorem~\ref{th:align}). \end{proof}

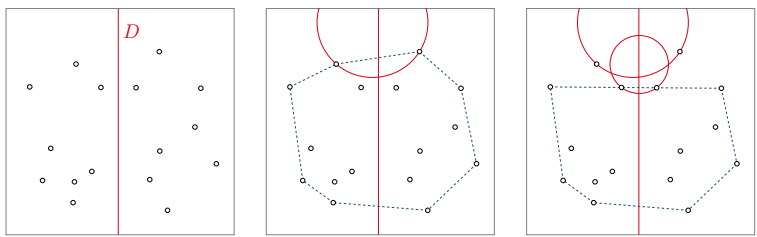
\begin{figure}[ht]%
\centering
\scalebox{0.3}{
\begin{tikzpicture} 

\draw[very thin, color = gray](0.,-2.) rectangle (10.,8.);
\clip(0.,-2.) rectangle (10.,8.);
\draw [very thick,color=rougejoli] (4.92,-2.) -- (4.92,8.);
\draw [color = rougejoli, very thick] (5,7)  node[right]{\Huge $D$};

\foreach \pos in {(1.96,1.82),(8.28,2.76),(9.22,1.14),(7.08,-0.92),(2.94,-0.58),(1.6,0.4),(3.76,0.8),(6.74,1.7),(6.3,0.44),(3.,0.34),(1.04,4.535),(8.54,4.475),(4.15996032254,4.51004031742),(5.70002175861,4.49771982593),(3.07,5.545),(6.72,6.095)}{
\node[noeud] at \pos {};}

\end{tikzpicture}
}
~
\scalebox{0.3}{
\begin{tikzpicture}
\draw[very thin, color = gray](0.,-2.) rectangle (10.,8.);
\clip(0.,-2.) rectangle (10.,8.);
\draw [very thick,color=rougejoli] (4.92,-2.) -- (4.92,8.);

\draw [dashed,very thick,color=bleujoli] (1.6,0.4)-- (1.04,4.535);

\draw [dashed,very thick,color=bleujoli] (1.04,4.535)-- (3.07,5.545);
\draw [dashed,very thick,color=bleujoli] (3.07,5.545)-- (6.72,6.095);
\draw [dashed,very thick,color=bleujoli] (6.72,6.095)-- (8.54,4.475);

\draw [very thick,color=rougejoli] (4.65939150769,7.38358363079) circle (2.4303405383cm);

\draw [dashed,very thick,color=bleujoli] (8.54,4.475)-- (9.22,1.14);

\draw [dashed,very thick,color=bleujoli] (9.22,1.14)-- (7.08,-0.92);

\draw [dashed,very thick,color=bleujoli] (7.08,-0.92)-- (2.94,-0.58);

\draw [dashed,very thick,color=bleujoli] (2.94,-0.58)-- (1.6,0.4);

\foreach \pos in {(1.96,1.82),(8.28,2.76),(9.22,1.14),(7.08,-0.92),(2.94,-0.58),(1.6,0.4),(3.76,0.8),(6.74,1.7),(6.3,0.44),(3.,0.34),(1.04,4.535),(8.54,4.475),(4.15996032254,4.51004031742),(5.70002175861,4.49771982593),(3.07,5.545),(6.72,6.095)}{
\node[noeud] at \pos {};}

\end{tikzpicture}
}
~
\scalebox{0.3}{
\begin{tikzpicture}
\draw[very thin, color = gray](0.,-2.) rectangle (10.,8.);
\clip(0.,-2.) rectangle (10.,8.);
\draw [very thick,color=rougejoli] (4.92,-2.) -- (4.92,8.);

\draw [dashed,very thick,color=bleujoli] (1.6,0.4)-- (1.04,4.535);

\draw [dashed,very thick,color=bleujoli] (8.54,4.475)-- (9.22,1.14);

\draw [dashed,very thick,color=bleujoli] (9.22,1.14)-- (7.08,-0.92);

\draw [dashed,very thick,color=bleujoli] (7.08,-0.92)-- (2.94,-0.58);

\draw [dashed,very thick,color=bleujoli] (2.94,-0.58)-- (1.6,0.4);

\draw [color=rougejoli] (4.65939150769,7.38358363079) circle (2.4303405383cm);

\draw [dashed,very thick,color=bleujoli] (1.04,4.535)-- (8.54,4.475);

\draw [very thick,color=rougejoli] (4.93817791558,5.5272394479) circle (1.28074848943cm);

\foreach \pos in {(1.96,1.82),(8.28,2.76),(9.22,1.14),(7.08,-0.92),(2.94,-0.58),(1.6,0.4),(3.76,0.8),(6.74,1.7),(6.3,0.44),(3.,0.34),(1.04,4.535),(8.54,4.475),(4.15996032254,4.51004031742),(5.70002175861,4.49771982593),(3.07,5.545),(6.72,6.095)}{
\node[noeud] at \pos {};}

\end{tikzpicture}
}
\caption{First steps of the algorithm for the general upper bound}%
\label{fig:bh_gen}%
\end{figure}

All the values between the lower bound of Property \ref{prop:lowerbound} and the upper bound of Property \ref{thm:upperbound} are reached:

\begin{theorem}
Let $n \in \mathbb{N}$ and $k \in \mathbb{N}$ be such that $\lceil \frac{1+\sqrt{1+4(n-1)}}{2} \rceil \leq k \leq \lceil \frac{n+1}{2} \rceil$.

There exists an $n$-point set $\mathcal{P}$ of $\mathbb R^2$ such that $\id_{D}(\mathcal P) = k$.
\end{theorem}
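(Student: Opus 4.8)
The plan is to interpolate between the two extremal configurations already in hand. A set of $n$ colinear points realizes the upper value $\lceil\frac{n+1}{2}\rceil$ (Theorem~\ref{th:align}), while the regular‑polygon arrangement of disks realizes the lower value $\lceil\frac{1+\sqrt{1+4(n-1)}}{2}\rceil$ (Proposition~\ref{prop:lowerbound}). To realize an intermediate value $k$, I would place the $n$ points \emph{one per face} of an arrangement of exactly $k$ disks that is optimal in the sense of Proposition~\ref{prop:lowerbound} (this makes $\id_D(\mathcal P)\le k$ automatic, since each point then has a distinct non‑empty signature and is covered), while \emph{forcing} $\id_D(\mathcal P)\ge k$ by arranging that $2k-1$ of the chosen points happen to be colinear: $2k-1$ colinear points need exactly $k$ disks by Theorem~\ref{th:align}, so Lemma~\ref{lem:inclus} gives $\id_D(\mathcal P)\ge k$. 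Thus $\mathcal P$ carries inside it a ``hard'' colinear subconfiguration, but the remaining points are hidden in otherwise empty cells of a ``cheap'' arrangement.

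To make this precise, first dispose of the top value: if $k=\lceil\frac{n+1}{2}\rceil$, take $\mathcal P$ to be $n$ colinear points and invoke Theorem~\ref{th:align}. So assume $\lceil\frac{1+\sqrt{1+4(n-1)}}{2}\rceil\le k\le\lceil\frac{n+1}{2}\rceil-1$; two easy inequalities then hold, namely $2k-1\le n-1$ (otherwise $k\ge\lceil\frac{n+1}{2}\rceil$) and, squaring the lower bound on $k$, $n\le k^2-k+1$. Take the arrangement $\mathcal D$ of $k$ disks of radius $1+\varepsilon$ centered at the vertices $v_1,\dots,v_k$ of a regular $k$‑gon inscribed in the unit circle; as recalled before Proposition~\ref{prop:lowerbound} it has exactly $k^2-k+1$ inner faces. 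The crux is the geometric claim that \emph{a generic line $L$ through the center of the polygon crosses exactly $2k-1$ distinct inner faces of $\mathcal D$}. Granting this, put one point of $\mathcal P$ in each of those $2k-1$ faces, chosen to lie \emph{on} $L$, so these $2k-1$ points are colinear; then put the remaining $n-(2k-1)$ points one per face among the $k^2-k+1-(2k-1)=(k-1)(k-2)$ other inner faces, which is possible since $n\le k^2-k+1$. Every point of $\mathcal P$ lies in a distinct inner face of $\mathcal D$, hence $\id_D(\mathcal P)\le k$; and the $2k-1$ colinear points form a subset needing $k$ disks, hence $\id_D(\mathcal P)\ge k$ by Lemma~\ref{lem:inclus}. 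Therefore $\id_D(\mathcal P)=k$.

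For the geometric claim: the center is at distance $1<1+\varepsilon$ from each $v_j$, so every disk of $\mathcal D$ contains the center in its interior, and $L$ meets each disk in a genuine chord whose interior contains the center. Hence $L$ meets the $k$ circles in $2k$ points, splitting $L$ into $2k+1$ pieces: the two unbounded ones lie outside all disks, and the $2k-1$ bounded ones each lie inside at least one disk, so each of them lies in an inner face. Walking along $L$, the set of disks currently containing the point changes by exactly one disk at each crossing; the $k$ crossings on one side of the center are the ``entering'' crossings and the $k$ on the other side the ``exiting'' ones. Since the left (resp. right) endpoint of the chord $L\cap D_j$ is a strictly increasing function of $\langle u,v_j\rangle$, where $u$ is the direction of $L$, the disks are entered in the same order as they are exited. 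Consequently the sequence of signatures along $L$ is $\varnothing$, grows one disk at a time up to the full set of $k$ disks, then shrinks one disk at a time back to $\varnothing$, removing the disks in the order they were added; all $2k-1$ intermediate signatures are therefore pairwise distinct, which is the claim.

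The main obstacle is exactly this last point — ruling out that $L$ re‑enters a face it has already left, i.e. that two of the $2k-1$ bounded pieces carry the same signature. What saves it is the monotonicity of the chord endpoints in $\langle u,v_j\rangle$, a feature special to the regular‑polygon arrangement, which forces ``entering order $=$ exiting order'' and hence makes all intermediate signatures distinct. Everything else is bookkeeping: the inequalities $2k-1\le n-1$ and $n\le k^2-k+1$, a suitable choice of ``generic'' direction $u$ (avoiding the finitely many directions for which two chords coincide or share an endpoint; note no line through the center is tangent to any disk), and a quick check of the small cases, which are in fact all covered by the top value $k=\lceil\frac{n+1}{2}\rceil$ whenever $k\le 2$.
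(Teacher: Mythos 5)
Your proposal is correct and follows essentially the same route as the paper: place one point per inner face of the optimal $k$-disk arrangement from Proposition~\ref{prop:lowerbound}, with $2k-1$ of them colinear on a line crossing $2k-1$ distinct faces, so that the upper bound comes from the arrangement and the lower bound from Theorem~\ref{th:align} via Lemma~\ref{lem:inclus}. The only differences are cosmetic: the paper justifies the ``$2k-1$ regions on a line'' claim by perturbing a line through the center and a circle intersection point, whereas you use the monotonicity of chord endpoints in $\langle u,v_j\rangle$ (and you handle the top value $k=\lceil\frac{n+1}{2}\rceil$ separately, which is in fact slightly more careful than the paper when $n$ is even and $2k-1=n+1$).
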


\begin{proof}
  Consider the optimal arrangement of $k$ disks based on a regular polygon given on Figure~\ref{zone_max}. There is a line $L$ cutting this construction into $2k - 1$ regions. Indeed, let $L'$ be a line going through an intersection of disks and the center of the polygon, for symmetry reason, this line goes through $k$ regions and $k-1$ intersections. So by shifting the line infinitesimally and parrallely, it is still going through the previous regions but for each intersection we have a new region. Therefore, there is indeed $2k-1$ regions crossed by this new line $L$.
  We set $2k-1$ points on $L$ in the different regions of the arrangement of disks. Since $k \geq \lceil \frac{1+\sqrt{1+4(n-1)}}{2} \rceil$ we can set the $n-(2k-1)$ remaining points in the other regions of the arrangement. At the end, the set of $k$ disks of the arrangment identifies the $n$ points that we have put in its different regions and, since there are $2k-1$ colinear points, no smaller set of disks can identify these points.

\end{proof}

\subsection{Improved upper bound for general configurations}

The upper bound of Proposition \ref{thm:upperbound} is tight for colinear points. A natural question is whether the bound is still tight if there are no three colinear points among $\mathcal P$. Actually, the bound is also tight if the points are cocyclic. But, if there are no three colinear points nor four cocyclic points in $\mathcal P$, the upper bound is not tight anymore. In this section, we said that a set of points of $\mathbb R^2$ is in {\em general configuration} if there are no three points of $\mathcal P$ on a line nor four points of $\mathcal P$ on a circle.

\begin{theorem}
\label{th:bh_frequente}
Let $\mathcal{P}\subseteq \mathbb R^2$ be a set of $n$ points in general configuration. Then  $\id_{D}(\mathcal P) \leq 2\lceil n/6 \rceil +1.$
\end{theorem}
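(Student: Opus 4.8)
The plan is to exploit the Delaunay triangulation $\mathrm{DT}(\mathcal P)$, which is well defined and unique precisely because $\mathcal P$ is in general configuration: no three colinear points forces every bounded face to be a genuine triangle, and no four cocyclic points forces the triangulation to be unique. The property I would use is that each triangular face $T$ of $\mathrm{DT}(\mathcal P)$ has an \emph{empty} circumscribed disk; hence an infinitesimal enlargement of the circumdisk of $T$ meets $\mathcal P$ in exactly the three vertices of $T$, and by shrinking or growing such a disk one can realize, as $D\cap\mathcal P$, various sub- and super-sets of a face's vertex set without picking up distant points. I would also use the fact that $\{p,q\}$ is realizable as $D\cap\mathcal P$ whenever $pq$ is a Delaunay edge (in particular an edge of the convex hull of the current point set).

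First I would set aside one ``global'' disk. After rotating the plane so that all abscissas are distinct, take a half-plane $D$ (an infinite-radius disk) whose bounding vertical line $L$ splits $\mathcal P$ into two parts of nearly equal size. This is the ``$+1$'' of the statement; its role is, \emph{inside each batch processed below}, to separate the points lying on one side of $L$ from those on the other. Then I would process the remaining points in batches of at most six, spending two disks per batch, for a total of $2\lceil n/6\rceil$ disks in addition to $D$.

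For the batching I would argue greedily, in the spirit of the convex-hull-peeling proof of Proposition~\ref{thm:upperbound}, but peeling six points at a time. Given the current point set $\mathcal P'$, its Delaunay triangulation near the top of its convex hull provides a small cluster $B$ of at most six vertices --- for instance the vertex sets of the two Delaunay triangles incident to the two upper hull edges that straddle $L$ --- whose circumdisks, suitably grown or shrunk, yield two disks $D_1,D_2$ with $(D_1\cup D_2)\cap\mathcal P'=B$ and with $B$ lying ``above'' the rest of $\mathcal P'$. I then need the membership patterns $(p\in D,\ p\in D_1,\ p\in D_2)$ to be pairwise distinct over $p\in B$: since $D$ already separates the at most three points of $B$ left of $L$ from the at most three right of $L$, it suffices that $D_1,D_2$ induce three distinct patterns on each side, which I arrange by choosing which of the two Delaunay triangles each of $D_1,D_2$ ``follows'' and how far each is enlarged. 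Deleting $B$ together with $D_1,D_2$ and iterating, every point ends up covered, every pair inside a common batch is separated by $D$, $D_1$ or $D_2$, and every pair in different batches is separated because the disks added after a batch is removed never touch that batch.

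The main obstacle is the local geometry of a single step: showing that at the top of \emph{any} point set in general configuration one can always find a cluster of at most six points together with two disks that (i) contain exactly that cluster among the remaining points and (ii), together with the global half-plane, resolve the cluster internally. This is exactly where both general-position hypotheses are needed --- four cocyclic points could force a circumdisk to swallow an unwanted fourth point, and three colinear points break the Delaunay structure and the ``at most two circle--line crossings'' bookkeeping, which is why the weaker bound $\lceil (n+1)/2\rceil$ of Proposition~\ref{thm:upperbound} is best possible in general. A secondary point to treat carefully is the last, possibly incomplete, batch of size $n-6\lfloor n/6\rfloor$, which is the reason the bound is $2\lceil n/6\rceil+1$ rather than $n/3+1$.
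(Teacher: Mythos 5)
Your high-level plan (one global separator plus repeated ``peel a small cluster with empty Delaunay circumdisks'') is in the right family, but the core step of your argument has a genuine gap, and it is not the route the paper takes. The difficulty is your single batch: you need, for an essentially arbitrary $6$-point cluster $B$ at the top of $\mathcal P'$, two disks $D_1,D_2$ with $(D_1\cup D_2)\cap\mathcal P'=B$ such that each of the three points of $B$ on each side of $L$ receives a distinct pattern among $\{D_1\},\{D_2\},\{D_1,D_2\}$. Counting, this forces each of $D_1,D_2$ to contain \emph{exactly four prescribed points} of $B$ and nothing else of $\mathcal P'$. The Delaunay triangulation only hands you empty circumdisks of \emph{triangles}; ``growing'' such a disk to capture a fourth point gives you no control over which point enters next nor any guarantee that the resulting four-point disk realizes the required pattern, and generically no disk picks out the particular $4$-subsets you need. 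You flag this as ``the main obstacle'' but do not resolve it, and it is precisely the content of the theorem. There is also an accounting problem: two Delaunay triangles incident to hull edges near $L$ typically share vertices (an adjacent pair has only four distinct vertices), so your batches have ``at most six'' points; with two disks per batch the number of batches can then exceed $\lceil n/6\rceil$ and the bound $2\lceil n/6\rceil+1$ is lost.

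The paper avoids both problems by making the half-planes do more work and the disks less. It invokes Ceder's six-partition theorem: three \emph{concurrent} lines through a point $A$ split $\mathcal P$ into six regions $a,\dots,f$ of size $\lceil n/6\rceil$ or $\lceil n/6\rceil-1$. Three half-planes bounded by these lines (that is, $3$ disks, not $1$) already separate any two points lying in different regions. Then, for the alternating triple $\{a,c,e\}$ (and separately $\{b,d,f\}$), a lemma guarantees that the Delaunay triangulation of the current point set contains a triangle with one vertex in each region; its circumdisk is empty, so \emph{one} disk removes \emph{three} points, and those three points are already mutually separated by the half-planes because they sit in three different regions. Iterating $\lceil n/6\rceil-1$ times on each triple gives $3+2(\lceil n/6\rceil-1)=2\lceil n/6\rceil+1$. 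So the only geometric existence statement needed is a $3$-point empty disk transversal to three regions --- exactly what Delaunay triangles provide --- rather than the $4$-point disks with prescribed intersection patterns that your construction would require. If you want to repair your proof, the missing ingredient is essentially this six-partition by concurrent lines.
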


The idea of the proof of this theorem is to give an algorithm that constructs an identifying set of disks of size $2\lceil n/6 \rceil +1$. The algorithm is based on the same principle that we used in the not restricted case:

\begin{enumerate}
\item Divide $\mathcal{P}$ in three equal parts using lines;
\item Choose a disk that contains exactly one point in each part, remove these points and repeat the operation.
  \end{enumerate}

The crucial part is to find the disk of the second step. For that, we use Delaunay triangulations - that is a triangulation of the points in such a way that the circumcircle of each triangle only contains its vertices. Since there are no three colinear nor four cocyclic points, such a triangulation always exists (and is unique). To find the disk of Step 2, we then need to find a Delaunay triangle that has a vertex in each part, as illustrated in Figure~\ref{fig:bh_frequente}. To insure the existence of such a triangle, we need to be more precise at Step~1.

%

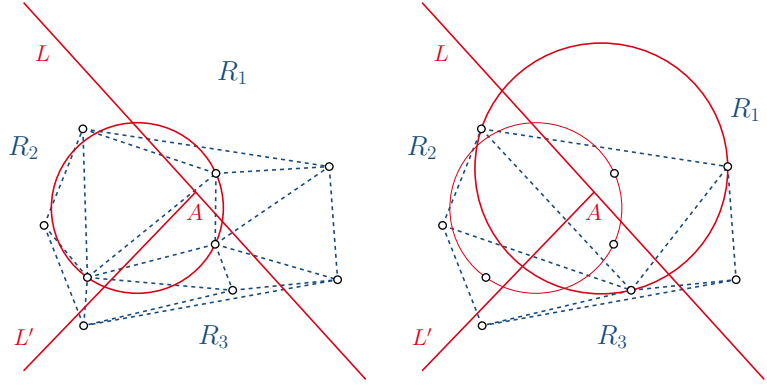
\begin{figure}[ht]%
\centering
\scalebox{0.5}{
\begin{tikzpicture}

\draw [very thick,color=bleujoli,dashed] (5.58,-0.72)-- (5.68,0.56);
\draw [very thick,color=bleujoli,dashed] (4.54,1.94)-- (5.68,0.56);
\draw [very thick,color=bleujoli,dashed] (4.54,1.94)-- (5.58,-0.72);
\draw [very thick,color=bleujoli,dashed] (9.04,1.44)-- (9.5,0.22);
\draw [very thick,color=bleujoli,dashed] (9.06,3.32)-- (9.04,1.44);
\draw [very thick,color=bleujoli,dashed] (9.06,3.32)-- (12.04,3.5);
\draw [very thick,color=bleujoli,dashed] (5.56,4.5)-- (12.04,3.5);
\draw [very thick,color=bleujoli,dashed] (12.04,3.5)-- (12.26,0.5);
\draw [very thick,color=bleujoli,dashed] (5.58,-0.72)-- (12.26,0.5);
\draw [very thick,color=bleujoli,dashed] (4.54,1.94)-- (5.56,4.5);
\draw [very thick,color=bleujoli,dashed] (9.5,0.22)-- (12.26,0.5);
\draw [very thick,color=bleujoli,dashed] (9.5,0.22)-- (5.58,-0.72);
\draw [very thick,color=bleujoli,dashed] (9.04,1.44)-- (12.04,3.5);
\draw [very thick,color=bleujoli,dashed] (9.04,1.44)-- (12.26,0.5);
\draw [very thick,color=bleujoli,dashed] (5.68,0.56)-- (9.04,1.44);
\draw [very thick,color=bleujoli,dashed] (5.56,4.5)-- (9.06,3.32);
\draw [very thick,color=bleujoli,dashed] (5.56,4.5)-- (5.68,0.56);
\draw [very thick,color=bleujoli,dashed] (5.68,0.56)-- (9.06,3.32);
\draw [very thick,color=bleujoli,dashed] (5.68,0.56)-- (9.5,0.22);

\draw [very thick,color=rougejoli,domain=4.:13.] plot(\x,{(--73.4988-6.64*\x)/5.98});
\draw [very thick,color=rougejoli,domain=4.0:8.526177482780211] plot(\x,{(--22.9895381801-3.94357550407*\x)/-3.76617748278});
\draw [very thick,color=rougejoli] (4,-1) node {\LARGE{$L'$}};
\draw [very thick,color=rougejoli] (4.5,6.5) node {\LARGE{$L$}};
\draw [very thick,color=rougejoli] (8.5,2.3) node {\LARGE{$A$}};

\draw [very thick, color = bleujoli] (9.5,6) node {\huge{$R_1$}};
\draw [very thick, color = bleujoli] (4,4) node {\huge{$R_2$}};
\draw [very thick, color = bleujoli] (9,-1) node {\huge{$R_3$}};

\draw [very thick,color=rougejoli] (6.99283972568,2.40188468377) circle (2.26187694927cm);

\foreach \pos in {(5.56,4.5),(4.54,1.94),(9.06,3.32),(9.04,1.44),(5.68,0.56),(5.58,-0.72),(12.04,3.5),(9.5,0.22),(12.26,0.5)}{
\node[noeud] at \pos {};}

\end{tikzpicture}
}
~
\scalebox{0.5}{
\begin{tikzpicture}

\draw [very thick,color=bleujoli,dashed] (4.54,1.94)-- (5.58,-0.72);
\draw [very thick,color=bleujoli,dashed] (5.56,4.5)-- (12.04,3.5);
\draw [very thick,color=bleujoli,dashed] (12.04,3.5)-- (12.26,0.5);
\draw [very thick,color=bleujoli,dashed] (5.58,-0.72)-- (12.26,0.5);
\draw [very thick,color=bleujoli,dashed] (4.54,1.94)-- (5.56,4.5);
\draw [very thick,color=bleujoli,dashed] (9.5,0.22)-- (12.26,0.5);
\draw [very thick,color=bleujoli,dashed] (9.5,0.22)-- (5.58,-0.72);
\draw [very thick,color=bleujoli,dashed] (4.54,1.94)-- (9.5,0.22);
\draw [very thick,color=bleujoli,dashed] (9.5,0.22)-- (12.04,3.5);
\draw [very thick,color=bleujoli,dashed] (5.56,4.5)-- (9.5,0.22);

\draw [very thick,color=rougejoli,domain=4.:13.] plot(\x,{(--73.4988-6.64*\x)/5.98});
\draw [very thick,color=rougejoli,domain=4.0:8.526177482780211] plot(\x,{(--22.9895381801-3.94357550407*\x)/-3.76617748278});
\draw [very thick,color=rougejoli] (4,-1) node {\LARGE{$L'$}};
\draw [very thick,color=rougejoli] (4.5,6.5) node {\LARGE{$L$}};
\draw [very thick,color=rougejoli] (8.5,2.3) node {\LARGE{$A$}};

\draw [very thick, color = bleujoli] (12.5,5) node {\huge{$R_1$}};
\draw [very thick, color = bleujoli] (4,4) node {\huge{$R_2$}};
\draw [very thick, color = bleujoli] (9,-1) node {\huge{$R_3$}};

\draw [thin,color=rougejoli] (6.99283972568,2.40188468377) circle (2.26187694927cm);
\draw [very thick,color=rougejoli] (8.71529939818,3.45114010019) circle (3.32505960572cm);

\foreach \pos in {(5.56,4.5),(4.54,1.94),(9.06,3.32),(9.04,1.44),(5.68,0.56),(5.58,-0.72),(12.04,3.5),(9.5,0.22),(12.26,0.5)}{
\node[noeud] at \pos {};}

\end{tikzpicture}
}

\caption{First steps of the method used in the main idea of the proof}%
\label{fig:bh_frequente}%
\end{figure}

Before going into details, we need two preliminary results.

\begin{theorem}[Ceder \cite{ceder1964generalized}]
\label{lem:sixpartite}
For $n$ points of $\mathbb R^2$ with no three colinear points, there is a way to divide the plan in six regions containing each between $\lceil \frac{n}{6} \rceil -1$ and $\lceil \frac{n}{6} \rceil$ points using three concurrent lines.
\end{theorem}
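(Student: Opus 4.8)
The plan is to prove the continuous (measure) version first and then transfer it to the point set, since the window $[\lceil n/6\rceil-1,\lceil n/6\rceil]$ is exactly the rounding slack one gets from discretising. First I would replace $\mathcal P$ by an absolutely continuous probability measure $\mu$ obtained by spreading a mass $1/n$ uniformly over a tiny disk around each point; three concurrent lines splitting $\mu$ into six pieces of mass $1/6$ will, once the disks shrink and the lines are placed in general position (through none of the points), cut $\mathcal P$ into six integer-sized parts summing to $n$, each within $1$ of $n/6$, which is the asserted range. The key structural observation that drives everything is the following reduction. Three lines through a common point $c$ create six cyclic sectors $s_1,\dots,s_6$, and each line being a halving line of $\mu$ is equivalent to $s_i+s_{i+1}+s_{i+2}=1/2$ for the three relevant $i$ (indices mod $6$); subtracting consecutive such equations forces $s_1=s_4$, $s_2=s_5$, $s_3=s_6$, so opposite sectors are automatically equal. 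Hence, once all three lines halve $\mu$, the six-equal-parts condition collapses to the two scalar equations $s_1=s_2=s_3$ among three consecutive thirds that sum to $1/2$.

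With this reduction the proof becomes a continuity argument in the right configuration space. I would set up a continuous family of balanced fans: for a rotation parameter $\theta$ I want a centre $c(\theta)$ together with three concurrent halving lines of $\mu$ through it, depending continuously on $\theta$ and cyclically relabelling the sectors as $\theta$ traverses the family. Granting such a family, define the two imbalances $g_1=s_1-s_2$ and $g_2=s_2-s_3$; advancing $\theta$ through the cyclic shift permutes the three thirds and therefore acts on $(g_1,g_2)$ by a fixed nontrivial linear map. A two-variable intermediate value argument (Poincar\'e--Miranda, or equivalently a Borsuk--Ulam / degree computation on $\mathbb S^1$ exploiting this symmetry) then yields a parameter at which $g_1=g_2=0$, i.e. $s_1=s_2=s_3$, which by the reduction above is the equipartition into sixths. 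For the discrete statement I would instead run the same sweeping/intermediate-value argument directly on the counting functions, which change by $\pm1$ as a line passes a point; the resulting off-by-one slack is precisely what produces the $\lceil n/6\rceil-1$ to $\lceil n/6\rceil$ window.

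The main obstacle is the existence and continuity of the family of concurrent halving fans underlying the second paragraph: a generic point $c$ has only one halving direction through it, so maintaining three simultaneously concurrent halving lines constrains $c$ and the three directions jointly, and this is where the genuine topology lives. Concretely, one must show that the locus of points admitting three concurrent halving lines is nonempty and can be traversed continuously while the sectors rotate --- this is the heart of Ceder's argument and is what upgrades three independent applications of the intermediate value theorem (which only balance opposite sectors) into a single equipartition. A secondary technical point is the transfer step: the lines must be perturbed into general position with respect to $\mathcal P$ while preserving concurrency, which is harmless because one may move the common point $c$ off the points and rotate each line slightly about it, but it must be checked that this nudging does not push any sector count outside the claimed window.
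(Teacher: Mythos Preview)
The paper does not prove this statement at all: it is quoted verbatim as a result of Ceder \cite{ceder1964generalized} and used as a black box in the proof of Theorem~\ref{th:bh_frequente}. There is therefore no ``paper's own proof'' to compare against; your proposal goes well beyond what the paper attempts.

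As for the content of your sketch: the structural reduction is correct and is indeed the standard first move --- if each of the three concurrent lines halves the mass, then subtracting the three half-plane equations forces opposite sectors to coincide, so the target collapses to $s_1=s_2=s_3$ with $s_1+s_2+s_3=1/2$. Your plan to finish via a one-parameter family of ``balanced fans'' and an intermediate-value/degree argument is also the right shape. You are equally right that the crux is producing a \emph{continuous} family of triples of concurrent halving lines along which the sectors undergo a cyclic shift; this is exactly where the nontrivial topology sits, and you explicitly flag it as unresolved. So what you have is an accurate outline with the hard step named but not carried out --- perfectly appropriate given that the paper itself simply cites Ceder and moves on. The discretisation remark (perturbing the concurrent lines off the points while keeping the common centre) is fine and does give the $[\lceil n/6\rceil-1,\lceil n/6\rceil]$ window.
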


\begin{lemma}
\label{lem:intersection}
Let $\mathcal{P}$ be a set of points of $\mathbb R^2$, $L$ a line and $L'$ a half-line with origin $A$ on $L$. If each of the three regions $R_1$, $R_2$ and $R_3$ made by $L$ and $L'$ contains one point of $\mathcal{P}$ and if $A$ is in the convex hull of $\mathcal{P}$, then every triangulation of $\mathcal{P}$ contains a triangle that has a vertex in each region.
\end{lemma}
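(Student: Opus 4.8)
The plan is to argue by contradiction. Fix an arbitrary triangulation $\mathcal T$ of $\mathcal P$ and suppose that no triangle of $\mathcal T$ has a vertex in each of $R_1,R_2,R_3$; I will derive a contradiction. First I would record the geometry of the three regions: since $L'$ leaves $L$ to one side, $R_3$ is an open half-plane bounded by $L$, while $R_1,R_2$ are the two open sectors --- each of angle $<\pi$, hence convex --- into which the ray $L'$ splits the opposite open half-plane $H$. Thus $R_1,R_2,R_3$ are pairwise disjoint open convex sets, $A$ lies on the boundary of all three, and a punctured neighbourhood of $A$ is covered by $R_1\cup R_2\cup R_3$ together with the two lines. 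I will also assume the harmless general-position facts that $A\notin\mathcal P$ and that no point of $\mathcal P$ lies on $L\cup L'$, so that each point of $\mathcal P$ lies in exactly one region; call its index the \emph{colour} of the point. The elementary fact used repeatedly is that a segment between two points of the same region stays inside that (convex) region.

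Next I would locate $A$. Since $A\in\operatorname{conv}(\mathcal P)$ and the triangles of $\mathcal T$ tile $\operatorname{conv}(\mathcal P)$, $A$ lies in some triangle; assume first that $A$ lies in the \emph{interior} of a triangle $T_A$ (the cases where $A$ lies on an edge of $\mathcal T$, or on $\partial\operatorname{conv}(\mathcal P)$, are treated separately). As $A\in\operatorname{int}(T_A)$ lies on $L$, the three vertices of $T_A$ cannot all lie on the same side of $L$ --- otherwise $T_A$, hence its interior, would be contained in one open half-plane, excluding $A$. So $T_A$ has a vertex in $R_3$ and a vertex in $H=R_1\cup R_2$. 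If $T_A$ had vertices in both $R_1$ and $R_2$ it would be tricoloured, contrary to assumption; hence, after possibly swapping the labels of $R_1$ and $R_2$, $T_A$ has \emph{no vertex in $R_2$}. Moreover $T_A$ contains a neighbourhood of $A$ and $A\in\overline{R_2}$, so $T_A$ meets the open region $R_2$.

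The heart of the argument is then a separation step. Let $\mathcal S$ be the set of triangles of $\mathcal T$ having no vertex in $R_2$, and let $U=\bigcup_{T\in\mathcal S}\overline T$; this is closed, $T_A\subseteq U$, and $\operatorname{int}(U)\supseteq\operatorname{int}(T_A)$ meets $R_2$ by the previous remark. On the other hand, a point $p\in\mathcal P\cap R_2$ (one exists by hypothesis) does not lie in $U$: every triangle of $\mathcal T$ containing $p$ has $p$ as a vertex, hence a vertex in $R_2$, hence is not in $\mathcal S$, and $p$ lies in no relative interior of an edge nor in the interior of a triangle. Since $R_2$ is connected and meets both the open set $\operatorname{int}(U)$ and the open set $\mathbb R^2\setminus U$, it meets $\partial U$; pick $w\in R_2\cap\partial U$. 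Because $w\in U$ but has no neighbourhood inside $U$, $w$ cannot be in the interior of a triangle (that triangle would then lie in $\mathcal S$) nor be a vertex of $\mathcal T$ (a vertex in $R_2$ lies in no triangle of $\mathcal S$, hence not in $U$); so $w$ is in the relative interior of an edge $e$ of $\mathcal T$, shared by two triangles $T',T''$. One of them, say $T'$, is in $\mathcal S$; the other, $T''$, is not (else $\overline{T'}\cup\overline{T''}$ is a neighbourhood of $w$ inside $U$), so $T''$ has a vertex in $R_2$ --- which must be the apex of $T''$, since the two endpoints of $e$ are vertices of $T'\in\mathcal S$ and hence not in $R_2$. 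If those two endpoints lay in different regions (necessarily $R_1$ and $R_3$), then $T''$ would be tricoloured, impossible. So both endpoints of $e$ lie in the same region, $R_1$ or $R_3$, whence $e$ is contained in that convex region --- but $w\in e\cap R_2$ and the regions are disjoint, a contradiction. This proves the lemma.

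The step I expect to cost the most effort is the reduction to the case where $A$ lies in the interior of a triangle. If instead $A$ lies on an edge $e_0$ of $\mathcal T$, the two triangles meeting along $e_0$ cover a full neighbourhood of $A$, and a short colour check on the (at most) four vertices involved --- using that $e_0$ passes through $A\in L$, so has one endpoint in $R_3$ and one in $H$, and that a monochromatic edge cannot pass through $A$ --- shows that either one of these two triangles is already tricoloured, or, after swapping $R_1$ and $R_2$, neither of them has a vertex in $R_2$ while their union meets $R_2$; the separation argument then runs verbatim with this union playing the role of $T_A$. The case $A\in\partial\operatorname{conv}(\mathcal P)$ is handled similarly, using the unique triangle along the hull edge through $A$. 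Everything else is routine.
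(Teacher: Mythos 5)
Your overall strategy is genuinely different from the paper's: the paper argues directly by taking, among the triangulation edges joining the two sectors adjacent to $L'$, the one that crosses $L'$ closest to $A$, and showing that the triangle lying on the $A$-side of that edge must have its third vertex in the remaining region. Your contradiction-plus-separation argument is workable, and most of it is sound (locating $A$ in a triangle $T_A$, reducing to the case where $T_A$ has no vertex in $R_2$, and the colour analysis of a boundary point $w$ lying on an \emph{interior} edge). But there is a genuine gap in the separation step: you assert that the edge $e$ containing $w$ is ``shared by two triangles $T',T''$'', whereas an arbitrary $w\in R_2\cap\partial U$ may lie in the relative interior of an edge of the boundary of $\operatorname{conv}(\mathcal P)$, which is incident to only one triangle. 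In that case the unique triangle $T'$ is in $\mathcal S$, both endpoints of $e$ avoid $R_2$, and if they lie in $R_1$ and $R_3$ respectively your argument yields no contradiction: a segment from $R_1$ to $R_3$ can perfectly well pass through $R_2$ (with $L$ the $x$-axis, $L'$ the positive $y$-axis and $R_2=\{x<0,y>0\}$, take the segment from $(1,1)$ to $(-1,-0.1)$), and the apex of $T'$ may lie in $R_1$ or $R_3$, so $T'$ need not be tricoloured. The repair is easy but has to be made: choose $w$ as the last point lying in $U$ along the straight segment from a point of $\operatorname{int}(T_A)\cap R_2$ to $p\in\mathcal P\cap R_2$; this segment stays in $\operatorname{conv}(\mathcal P)\cap R_2$ by convexity, and if such a $w$ were interior to a hull edge whose unique triangle is in $\mathcal S$, the points of the segment just past $w$ would still lie in that closed triangle, hence in $U$, contradicting the choice of $w$.

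A second, smaller reservation: the deferred cases are not as routine as you suggest. When $A$ lies in the relative interior of a hull edge $e_0$, the unique incident triangle covers only a half-neighbourhood of $A$, and it is not automatic that this half-neighbourhood meets the missing colour class; one needs the hypothesis that every region contains a point of $\mathcal P$ (so that each region meets $\operatorname{conv}(\mathcal P)$ arbitrarily close to $A$ via the segment from $A$ to that point, which forces the hull to lie on the correct side of $e_0$). You should also say how your general-position assumptions ($A\notin\mathcal P$, no point of $\mathcal P$ on $L\cup L'$) match the way the lemma is invoked. With these repairs your proof goes through, but it is considerably longer than the paper's direct extremal argument, which sidesteps all convex-hull boundary issues.
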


\begin{proof}
  
  Let $T$ be a triangulation of $\mathcal{P}$. Since the intersection $A$ of $L$ and $L'$ is inside the convex hull, there is at least a segment of $T$ between any pair of regions.

  Consider the segments between the regions separated by $L'$, namely $R_2$ and $R_3$. Let $[x,y]$ be the segment which cut $L'$ the closest from $A$. Since $A$ is in the convex hull of $\mathcal{P}$, there is at least one point $z$ of $\mathcal{P}$ such that $(x,y,z)$ is a triangle of $T$ in the direction of $A$ from this segment. If $z$ is in $R_2$ or $R_3$, then the segment $[x,z]$ or $[y,z]$ would intersects $L'$ closer to $A$ than $[x,y]$, which contradicts the hypothesis that $[x,y]$ is the closest segment to $A$. Therefore $z$ is in $R_1$ and $(x,y,z)$ form a triangle with one vertex in each region.

\end{proof}

\begin{proof}[Proof of Theorem \ref{th:bh_frequente}]

Using Theorem~\ref{lem:sixpartite}, there exist three concurrent lines $L_1$, $L_2$ and $L_3$ that divides the plane into six regions of the same size up to one. Let $A$ be their common intersection. Let $D_1$, $D_2$ and $D_3$ be three half-planes defining by $L_1$, $L_2$ and $L_3$ such that every point is in at least one half-plane.  Let $a$, $b$, $c$, $d$, $e$ and $f$ be the six regions of the plane created by these lines, as illustrated in Figure~\ref{fig:sixpartite}.
    
First consider the regions $a$, $c$ and $e$. Each of these regions contains between $\lceil \frac{n}{6} \rceil - 1$ and $\lceil \frac{n}{6} \rceil$ points. For construction reasons, if there is a point in each region then $A$ is inside the triangle formed by these points, so Lemma~\ref{lem:intersection} always applies.

Consider the following process. Add $D_1$, $D_2$ and $D_3$ to the future set of identifying disks. Set in $P'$ all the points of $a$, $c$ and $e$.
Then repeat the following operation $\lceil \frac{n}{6} \rceil - 1$ times. Consider the Delaunay triangulation of $P'$, at least one triangle $(x,y,z)$ has its vertices in the three different regions. Since it is a triangle of a Delaunay triangulation, its circumscribed circle $C$ contains no other remaining points. Add $D_{x,y,z}$, the disk of perimeter $C$, to the set of identifying disks, remove $x$, $y$ and $z$ from $P'$ and iterate the process.

We do the same iterated operation for the regions $b$, $d$ and $f$.

At the end of each step of the process all the considered points are separated from all the other points. Indeed, at each step $x$, $y$ and $z$ are separated from the points of the other regions by the half-planes and, since each considered triangle comes from a Delaunay triangulation, their circumscribed circle contains no other points of $P'$. 

If a point has not been considered at the end of the processes, it is alone in its regions and therefore is isolated from all the other points. Moreover, by the selection of the half-planes, it is inside at least one half-plane so it is covered.

Therefore, this algorithm constructs an identifying set of disks of size $3+2(\lceil \frac{n}{6} \rceil -1) = 2\lceil \frac{n}{6} \rceil +1$.
\end{proof}

\begin{figure}[ht]
\centering
\begin{tikzpicture}

\newcommand{\E}{(0,6) .. controls +(1,0) and +(-1,1) .. (3,4) 
            .. controls +(1,-1) and +(2,2) .. (6,-3)
            .. controls +(-2,-2) and +(1,-2) .. (-3,-2)
            .. controls +(-1,2) and +(-1,-1) .. (-3,3)
            .. controls +(1,1) and +(-1,0) .. (0,6)}
\newcommand{\F}{(-7,0)--(7,0)--(7,5)--(-7,5)--cycle}
\newcommand{\G}{(-5,-5)--(7,5)--(5,-5)--cycle}
\newcommand{\K}{(-3,6)--(5,-6)--(-5,-6)--cycle}

  \begin{scope}[scale=0.3]
\begin{scope}[color = lightgray, opacity = 0.5]
\clip \E;
\begin{scope}
\clip \F;
\fill \G;
\fill \K;
\end{scope}
\begin{scope}
\clip \G;
\fill \K;
\end{scope}
\end{scope}
\draw \E;  
\draw[very thick,color=rougejoli] (-7,0)--(7,0);
\draw[very thick,color=rougejoli] (-5,-5)--(7,5);
\draw[very thick,color=rougejoli] (-3,6)--(5,-6);
\node at (1,3) {{$\frac{1}{6}$}};
\node at (3,5.5) {{$a$}};
\node at (4,-2) {{$\frac{1}{6}$}};
\node at (7,2) {{$f$}};
\node at (-2,-1) {{$\frac{1}{6}$}};
\node at (8,-3) {{$e$}};
\node at (1,-2.5) {{$\frac{1}{6}$}};
\node at (1,-5.5) {{$d$}};
\node at (4,1) {{$\frac{1}{6}$}};
\node at (-4.5,-2) {{$c$}};
\node at (-2,1.5) {{$\frac{1}{6}$}};
\node at (-5,2) {{$b$}};

\draw[very thick,color=rougejoli] (5.8,-5.2) node {$L_1$};
\draw[very thick,color=rougejoli] (-4,-5.3) node {$L_2$};
\draw[very thick,color=rougejoli] (-6.3,-0.8) node {$L_3$};
\end{scope}

\end{tikzpicture}
\caption{Dividing the points into six equal parts}
\label{fig:sixpartite}
\end{figure}
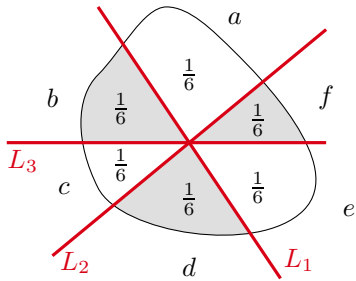

The previous bound is tight, up to a constant 2, when points are located on an half-parabola, the curve constituted of one side of a parabola symmetry axis:

\begin{proposition}
Let $\mathcal{P}\subseteq \mathbb R^2$ be a set of $n$ points placed on an half-parabola. Then, $\id_{D}(\mathcal P) \geq \frac{n}{3}.$
\end{proposition}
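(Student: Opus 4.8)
The plan is to reduce the statement to a counting argument on how many times the boundary circle of a disk can cross a half-parabola. After an affine change of coordinates I would assume the half-parabola is $\mathcal H = \{\phi(t) : t \ge 0\}$ with $\phi(t) = (t,t^2)$, and write the points of $\mathcal P$ as $\phi(t_1),\dots,\phi(t_n)$ with $0 \le t_1 < \dots < t_n$. For a disk $D$ with centre $(a,b)$ and radius $\rho$, one has $\phi(t) \in D$ if and only if $g_D(t) := (t-a)^2 + (t^2-b)^2 - \rho^2 \le 0$, and $g_D(t) = t^4 + (1-2b)\,t^2 - 2a\,t + (a^2+b^2-\rho^2)$.

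The first key step is the lemma that the open arc $\{\phi(t) : t>0\}$ meets the circle $\partial D$ in at most three points. This follows from Descartes' rule of signs: the quartic $g_D$ has no $t^3$-term, so its nonzero coefficient sequence $1,\ 1-2b,\ -2a,\ a^2+b^2-\rho^2$ has at most three sign changes, hence $g_D$ has at most three positive roots. Consequently the set $I_D := \{t \ge 0 : \phi(t) \in D\}$ — which is bounded, $g_D$ being a monic quartic — has at most three boundary points inside $(0,+\infty)$; I call these the \emph{crossing parameters} of $D$, each being a positive root of $g_D$.

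Next, given a set of disks $\mathcal D$ identifying $\mathcal P$, I would first enlarge every disk by a sufficiently small amount so that no point of $\mathcal P$ lies on the boundary of a disk of $\mathcal D$; this leaves each disk's intersection with $\mathcal P$ unchanged, hence preserves the identifying property. Then, for each $i \in \{1,\dots,n-1\}$, a disk of $\mathcal D$ separating $x_i$ from $x_{i+1}$ has $g_D(t_i)$ and $g_D(t_{i+1})$ of opposite strict signs, so by the intermediate value theorem it has a crossing parameter in $(t_i,t_{i+1})$; and a disk of $\mathcal D$ covering $x_n$ has $t_n$ in the interior of its bounded set $I_D$, hence a crossing parameter in $(t_n,+\infty)$. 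The $n$ intervals $(t_1,t_2),\dots,(t_{n-1},t_n),(t_n,+\infty)$ are pairwise disjoint, so the total number of crossing parameters over all disks of $\mathcal D$ is at least $n$; since each disk has at most three, $|\mathcal D| \ge n/3$, which is the claim.

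I do not expect a serious obstacle, but two points need care. First, the Descartes bound genuinely relies on the absence of a $t^3$-term (that is, on the parabola's axis being vertical in our normalization): with a cubic term the argument would only give four crossings per disk and the weaker bound $n/4$. Second, one must handle the degenerate configurations where a point of $\mathcal P$ lies on a disk boundary; this is exactly the purpose of the small-enlargement reduction, which guarantees that the separation and covering conditions force a strict sign change of $g_D$ and hence a crossing parameter genuinely interior to the relevant interval.
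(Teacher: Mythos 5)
Your proof is correct and follows essentially the same route as the paper's: at least $n$ boundary crossings of the half-parabola are forced (one between each consecutive pair of points, plus one beyond $x_n$ from the covering condition), and each circle accounts for at most three crossings because the quartic $g_D$ has no cubic term. The only differences are cosmetic: the paper derives the three-crossing bound from Vieta (the four roots sum to zero, so at most three can be nonnegative) rather than Descartes' rule of signs, and it is less explicit about points lying exactly on a disk boundary, which your small-enlargement step handles cleanly (note only that your normalization should be by a similarity, not a general affine map, since the latter does not preserve disks).
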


\begin{proof} 

  Let $\mathcal{P}$ be a set of $n$ points located on a half-parabola $H$. We denote by $x_1,...,x_n$ the points, respecting their order on $H$, with $x_1$ the closest to the extrema of $H$.

  Let $\mathcal D$ be a set of disks identifying $\mathcal P$.
  For any $i\in\{1,...,n-1\}$, $x_i$ and $x_{i+1}$ are separated by $\mathcal D$. It means that there is a disk $D\in \mathcal D$ whose perimeter intersects $H$ between $x_i$ and $x_{i+1}$. Moreover, $x_n$ is covered by $\mathcal D$, thus there is a disk whose perimeter intersects $H$ after $x_n$. In total, there are at least $n$ intersections between $H$ and some perimeters of disks of $\mathcal D$.

We now prove that a circle $C$ can intersect $H$ into at most three points.
Let $(x,y) \in C \cap H$. Without loss of generality, $(x,y)$ satisfies the following set of equations, with $x_0$, $y_0$, $r$ that are constant.

  \[
  \left\{
  \begin{array}{l} 
        y = x^2\\
        (x-x_0)^2+(y-y_0)^2 = r^2\\
        x\geq 0
  \end{array}
    \right.
    \]

In particular, $x$ is a solution of:

\begin{equation}
(X-x_0)^2+(X^2-y_0)^2 = r^2
\label{eq:halfpar}
\end{equation}

There is no term in $X^3$ in the previous equation. Thus, if $x_1$, $x_2$, $x_3$ and $x_4$ are solutions of (\ref{eq:halfpar}, we have $x_1 + x_2 + x_3 + x_4 = 0$. Since $x \geq 0$, there are at most three possible values for $x$.

Since there are at least $n$ intersections between $H$ and an identifying set of disks $\mathcal{D}$ and since a circle intersects a half-parabola at most three times, we necessarly have $\mathcal{D} \geq \frac{n}{3}$.

\end{proof}

\section{Complexity when the radius is fixed}\label{sec:complexity}

In this section, we consider the complexity of the following decision problems (with $r \in \mathbb R$):

\noindent\decisionpb{{\sc Identification-Disk}(r)}{A finite set $\mathcal P\subseteq R^2$, an integer $k$.}{Is it true that $\id_{D,r}(\mathcal P)\leq k$?}{0.9}

\noindent\decisionpb{{\sc Colinear Identification-Disk}(r)}{A finite set $\mathcal P\subseteq R^2$ of colinear points, an integer $k$.}{Is it true that $\id_{D,r}(\mathcal P)\leq k$?}{0.9}

\begin{theorem}
  {\sc Identification-Disk}(r) is $\mathcal{NP}$-complete.
\end{theorem}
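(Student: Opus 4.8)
The plan is to prove $\mathcal{NP}$-completeness of {\sc Identification-Disk}$(r)$ by first observing membership in $\mathcal{NP}$, then giving a polynomial-time reduction from a known $\mathcal{NP}$-complete problem. For membership, note that by the discussion in Section~\ref{sec:preli} the relevant hypergraph $\mathcal{H}(\mathcal{P})$ restricted to disks of radius $r$ has polynomially many hyperedges (the candidate subsets $D\cap\mathcal{P}$ can be enumerated in polynomial time, since one may assume each disk of radius $r$ has two points of $\mathcal{P}$ on its boundary or is positioned canonically), so a certificate is simply a list of at most $k$ such disks, and checking that it covers and separates all pairs is polynomial.

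For hardness, the natural route is to reduce from {\sc Identifying Code} in unit disk graphs, which is stated in the excerpt (via M\"uller and Sereni~\cite{MS09}) to be $\mathcal{NP}$-complete, exploiting the correspondence noted in Section~\ref{sec:preli}: an identifying code of a unit disk graph $G$ with centers $\mathcal{P}$ is the same as an identifying set of $\mathcal{P}$ using disks of radius $2$ centered at points of $\mathcal{P}$. The gap to close is that {\sc Identification-Disk}$(r)$ allows disks of radius $r$ centered \emph{anywhere}, not just at points of $\mathcal{P}$, and it forces every point to be covered. So the reduction must take an instance $(G,k)$ of unit disk graph identifying code and build a point set $\mathcal{P}'$ (the union of the original centers $\mathcal{P}$ with auxiliary gadget points) together with a budget $k'$ such that $\id_{D,r}(\mathcal{P}')\le k'$ iff $G$ has an identifying code of size $\le k$. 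The key design goal of the gadgets is to \emph{force} any near-optimal solution to use only disks that behave like closed-neighbourhood balls centered (effectively) at original points: one standard trick is to attach to each original point a small cluster of extra points arranged so that the only economical way to cover and separate the cluster is with a disk in a prescribed ``canonical'' position coinciding with the point's radius-$2$ ball, while simultaneously making every ``non-canonical'' disk useless for the clusters, so spending a disk on one cluster never helps another. One then sets $k' = k + (\text{number of forced gadget disks})$ and argues both directions.

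The main obstacle I expect is the geometric rigidity of the gadgets: unlike abstract hypergraph reductions, here every disk one places has global geometric side effects — a disk chosen to handle one cluster may inadvertently separate or cover points of a distant cluster or of the original graph, breaking the clean correspondence with the combinatorial instance. Controlling this requires placing the clusters far apart relative to $r$ (so that no radius-$r$ disk simultaneously touches two clusters or a cluster and a far-away original point) while keeping coordinates polynomially bounded, and checking that the ``canonical'' disks realizing closed neighbourhoods of $G$ are genuinely realizable with radius exactly $r$ — which is why one rescales $G$'s geometric representation so that $r$ plays the role of the figure $2$ above. A secondary subtlety is handling the \emph{covering} requirement: one must ensure that in the unit-disk-graph instance one may assume $G$ has no isolated vertex (or add a single extra disk to the budget), since the identifying-code definition already demands domination; this matches the ``at most one extra disk'' remark in Section~\ref{sec:defi}.

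Concretely the steps are: (1) show {\sc Identification-Disk}$(r)\in\mathcal{NP}$ by bounding the number of combinatorially distinct radius-$r$ disks; (2) fix a target problem — unit disk graph {\sc Identifying Code} — and rescale its geometric realization so closed-neighbourhood balls have radius $r$; (3) define the augmented point set $\mathcal{P}'$ by attaching a separated forcing gadget to each vertex; (4) prove that every identifying disk set of $\mathcal{P}'$ of size $\le k'$ must, after a normalization argument, consist of the forced gadget disks plus a set of canonical balls whose centers form an identifying code of $G$; (5) prove the converse, that an identifying code of $G$ lifts to an identifying disk set of the right size; and (6) conclude polynomiality of the construction. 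Step~(4) — the normalization lemma showing one may assume all disks are canonical — is where the bulk of the geometric case analysis lives and is the crux of the argument.
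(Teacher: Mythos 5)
Your proposal is a plan rather than a proof: the step you yourself identify as the crux --- the normalization lemma of step (4), showing that any small identifying set of disks for the augmented point set can be assumed to consist of forced gadget disks plus canonical closed-neighbourhood balls --- is exactly the part that is not carried out, and it is far from routine. The difficulty is structural, not merely technical. A disk of radius $r$ may be centered anywhere, so it can realize subsets of $\mathcal{P}'$ that are not closed neighbourhoods of any vertex; an adversarial solution could use such disks to separate many pairs cheaply, and nothing in the sketch rules this out. Worse, your two design requirements pull against each other: you want each gadget cluster to be handled only by a disk coinciding with the radius-$r$ neighbourhood ball of its host vertex, but that very ball must simultaneously reach the host's neighbours in $G$, so it is necessarily a long-range disk interacting with other vertices and their clusters; meanwhile you also want clusters isolated so that no radius-$r$ disk touches two of them. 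Reconciling these constraints (and verifying that adding cluster points does not change which subsets are realizable by radius-$r$ disks, i.e., does not alter the unit disk graph you started from) is precisely the geometric case analysis that would constitute the proof, and it is missing.

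The paper avoids all of this with a much more elementary reduction: it reduces from the problem of partitioning a grid graph into paths on three vertices ($P_3$-\textsc{Partition-Grid}, NP-complete by Bevern et al.), takes $\mathcal{P}=V(G)\subseteq\mathbb{Z}^2$, $r=1/2$ and budget $2n/3$, and exploits the single observation that with this radius every disk contains at most two points of $\mathcal{P}$, a disk containing two points corresponding exactly to an edge of the grid graph. The problem then collapses to a combinatorial counting argument: an identifying set of size $2n/3$ forces zero singleton disks and forces the edge-disks to induce connected components that are precisely paths on three vertices. If you want to salvage your route, the lesson to draw is to choose the source problem and the scale so that the geometry trivializes the family of realizable hyperedges, rather than trying to force disks into canonical positions by gadgets. (Your NP-membership argument, by contrast, is fine in spirit: the realizable sets $D\cap\mathcal{P}$ for disks of radius $r$ number only polynomially many, since each can be realized by a disk with two points of $\mathcal{P}$ on its boundary or in a canonical position relative to a single point; the paper does not even bother to spell this out.)
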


\begin{proof}
  We prove the result for $r=1/2$ which is not restrictive. We reduce this problem from the problem of partitioning a grid graph into path on three vertices. A {\em grid graph} is a graph with vertex set included in $\mathbb Z^2$ and two vertices are adjacent if they are at Euclidean distance 1.

\noindent\decisionpb{$P_3$-{\sc Partition-Grid}}{A grid graph $G$.}{Is there a partition of the vertices of $G$ in such a way that each part induces a path on three vertices?}{0.9}

Bevern {\em et al.} \cite{P3}  proved that this problem is ${\mathcal NP}$-complete.

\

Let $G$ be an instance of $P_3$-{\sc Partition-Grid} and $n=|V(G)|$. The instance of the identification problem is $\mathcal{P} = V(G)$ and $k=2n/3$. We have to prove that $G$ has a $P_3$-partition if and only if $V(G)$ is identified by $2n/3$ disks of radius $1/2$.

\smallskip

Assume first that there is a partition of $G$ into path on three vertices. For each part $(x,y,z)$ of the $P_3$ partition, add to the identifying set the disk $D_{x,y}$ of radius $1/2$ that contains $x$ and $y$ and the disk $D_{y,z}$ that contains $y$ and $z$. Since $(x,y)$ and $(y,z)$ are edges of $G$, these disks exist and contains exactly two points. Furthermore, $x$ is the only point that is contained only in $D_{x,y}$, $z$ is the only point that is contained only in $D_{y,z}$ and $y$ is the only point that is contained exactly in $D_{x,y}$ and $D_{y,z}$. Hence, we obtain an identifying set of disks of size $2n/3$.

\smallskip

Assume now that there is an identifying set of disks $\mathcal D$ that identify $V(G)$ with $2n/3$ disks. Since the points are at distance at least 1, every disk contains at most two points. Without loss of generality, we can suppose that if a disk contains only one point, then this point is not included in any other disk. Indeed, assume there are two points $x,y$ and two disks $D_1$ and $D_2$ such that $D_1$ contains only $x$ and $D_2$ contains both $x$ and $y$, then we can replace $D_2$ by $D_2'$ that contains only $y$ and the situation is similar, $V(G)$ is still identified and the number of disks is the same.

Let $a$ be the number of disks containing only one point. Let $V'$ be the $n-a$ points not covered by these $a$ disks. Let $G'$ be the graph with vertices $V'$ and edges $(x,y)$ if $x,y$ are contained together in a disk of $\mathcal D$. Note that $G'$ is a subgraph of $G$. The graph $G'$ has $n-a$ vertices and its connected components have at least three vertices. Indeed, if a component as only two vertices then these vertices are not identified. So there are $k \leq (n-a)/3$ connected components. We name these connected components $\{G_1, ... ,G_k\}$. The number of edges of $G'$ is $\sum_{i=1}^{k}E(G_i) \geq \sum_{i=1}^{k}(V(G_i) -1) \geq (n-a) -k \geq 2(n-a)/3$.

Since a disk is either containing one point (and there are $a$ such disks) or corresponds to an edge of $G'$, there are at least $a+2(n-a)/3 = 2n/3+2a/3$ disks in $\mathcal D$.  Therefore we necessarily have $a=0$ and there are exactly $n/3$ connected components in $G'$, each of them being of size 3. This is a $P_3$-partition of $G$.
\end{proof}

However if all of the points are colinear then this problem can be solved in linear time:

\begin{theorem}
\label{th:colin_id_r}
	{\sc Colinear Identification-Disk}(r) can be solved in linear time.
\end{theorem}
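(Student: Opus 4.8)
The plan is to pass from disks to intervals, prove that an optimal solution can be taken in a rigid ``staircase'' form, and then solve the resulting one–dimensional problem by a left–to–right dynamic program running in linear time.

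\textbf{Reduction to intervals.} Order the points $x_1<\dots<x_n$ along the line $L$. The trace of a radius-$r$ disk on $L$ is a segment of length at most $2r$, and every such segment is the trace of some radius-$r$ disk; hence, for identification purposes, a disk is the same thing as a \emph{block} $\{x_i,\dots,x_j\}$ of consecutive points with $x_j-x_i\le 2r$ (singletons allowed), and $\id_{D,r}(\mathcal P)$ is the least number of blocks that cover every point and separate every pair. A single two-pointer scan computes, for each $i$, the largest $\phi(i)$ with $x_{\phi(i)}-x_i\le 2r$; the admissible blocks are exactly the $[i,j]$ with $j\le\phi(i)$, all in $O(n)$ time. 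Two simple facts: a block separates the consecutive pair $(x_a,x_{a+1})$ iff it has an endpoint in the gap between them, and covering $x_1$ (resp. $x_n$) forces a block to start at $x_1$ (resp. end at $x_n$); so every one of the $n+1$ gaps $0,1,\dots,n$ must be an endpoint of a chosen block. (If a consecutive gap exceeds $2r$, no block spans it, cross-cluster pairs are automatically separated once everything is covered, and the instance splits into independent ``clusters''; the algorithm below is run on each, the running times adding to $O(n)$, so one may assume a single cluster.)

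\textbf{Staircase normal form.} Call a family of blocks a \emph{staircase} if no block contains another; equivalently its blocks can be listed $B_1,\dots,B_k$ with strictly increasing left endpoints and strictly increasing right endpoints. First, if a staircase covers all points and separates all \emph{consecutive} pairs, then it separates \emph{all} pairs: in a staircase the set of blocks containing $x_a$ is always a contiguous index range $\{t_1(a),\dots,t_2(a)\}$, both $t_1$ and $t_2$ are nondecreasing in $a$, and every cut gap forces $t_1$ or $t_2$ to increase strictly, so two points with the same signature would have an uncut gap strictly between them — impossible. Second, any identifying family can be turned into an identifying staircase \emph{without increasing its size}: a block nested inside another covers nothing new, so it is present only for the cuts at its two endpoints, and since the enclosing block is admissible so are all its subblocks, leaving room to slide the inner block toward an edge of the outer one (or to detach it) so that all cuts, coverage and separations are preserved; iterating removes all nestings. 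Consequently $\id_{D,r}(\mathcal P)$ equals the minimum size of a staircase $B_1,\dots,B_k$ with $B_1$ starting at $x_1$, $B_k$ ending at $x_n$, consecutive blocks overlapping or abutting, and the $2k$ endpoints hitting every gap $0,\dots,n$.

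\textbf{Linear-time dynamic programming.} This last description is a one-dimensional optimization solvable by a sweep. Process the gaps $0,1,\dots,n$ from left to right, keeping as state the right endpoint of the last block placed (together with the guarantee that all gaps to its left are already cut) and minimizing the number of blocks used. A transition appends the next block $[\ell,r]$; admissibility ($r\le\phi(\ell)$), the overlap condition ($\ell-1$ at most the previous right endpoint), and the requirement that the previous right endpoint and the new left endpoint leave no gap uncut between them restrict $\ell$ and $r$ to a constant number of candidates relative to the previous state, all read off from $\phi$. Hence the $O(n)$ states are processed in $O(1)$ amortized time each, backtracking recovers an optimal family, and the total time is $O(n)$.

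\textbf{Where the difficulty lies.} The reductions and the dynamic program are routine; the crux is the normalization step — that nesting can always be removed at no cost. A careless deletion of a nested block can ruin the separation of a pair it alone handled (typically two points on opposite sides of the inner block, both inside the outer one), so one needs a coordinated local move; equivalently, one must show that the minimum over staircases equals the true minimum, for which an alternative is to prove a lower bound on $\id_{D,r}(\mathcal P)$ matching the value output by the dynamic program.
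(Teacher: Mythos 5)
Your reduction to blocks of consecutive points, the splitting into independent clusters at gaps longer than $2r$, and the observation that a staircase covering all points and cutting every consecutive gap separates \emph{all} pairs are all correct. The genuine gap is exactly where you locate "the crux": the claim that any identifying family can be converted into an identifying staircase of the same size is asserted but not proved, and the specific move you propose does not work. Sliding a nested block changes which gaps its two endpoints cut, so "all cuts, coverage and separations" are \emph{not} preserved. Concretely, take five points with $x_5-x_1\le 2r$ and the optimal identifying family $B_1=\{x_1,x_2,x_3,x_4\}$, $B_2=\{x_2,x_3\}$, $B_3=\{x_3,x_4,x_5\}$ (signatures $\{1\},\{1,2\},\{1,2,3\},\{1,3\},\{3\}$, all distinct). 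Here $B_2$ is nested in $B_1$; sliding it to $\{x_1,x_2\}$ merges the signatures of $x_1$ and $x_2$, sliding it to $\{x_3,x_4\}$ merges those of $x_3$ and $x_4$, and deleting it merges those of $x_1$ and $x_2$. The only size-$3$ staircases for this instance are $\{x_1,x_2\},\{x_2,x_3,x_4\},\{x_4,x_5\}$ and $\{x_1,x_2,x_3\},\{x_2,x_3,x_4\},\{x_3,x_4,x_5\}$, neither of which is obtained from the given family by a local move on the nested pair alone: all three blocks must be rewritten, and one must then check that the rewritten blocks are still admissible (span at most $2r$) and that the global rewriting process terminates. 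You explicitly acknowledge that "one needs a coordinated local move" but never exhibit it, so the equality between the true optimum and the staircase optimum — on which the entire dynamic program rests — is unestablished.

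This missing step is precisely where the paper's proof does its work: it first shows (by taking an optimal solution minimizing $\sum_{D\in\mathcal D}|D\cap\mathcal P|$ and a case analysis on two disk boundaries falling in the same gap) that some optimal solution is \emph{piece-wise perfect}, and then that every perfectly identifiable component admits the rigid normal form $\{x_1,x_2\},\{x_2,x_3,x_4\},\dots,\{x_{2p},x_{2p+1}\}$, after which a greedy algorithm taking maximal normal-form prefixes is shown optimal by a minimal-counterexample argument. A secondary issue: even granting the staircase lemma, your dynamic program is underspecified — the state "right endpoint of the last block, all gaps to its left already cut" does not match the structure of feasible staircases, in which several gaps below the current right endpoint may still be pending (to be cut by left endpoints of later blocks), so the claim of $O(1)$ transitions per state needs an additional structural argument or a richer state.
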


Note that if the disks are required to be centered on the points, this problem is equivalent to the problem of identifying codes in unit interval graphs, whose complexity is surprisingly still open.

\medskip

To prove Theorem~\ref{th:colin_id_r}, we introduce few definitions and preliminary results.
Let $\mathcal{P}$ be a set of $n$ colinear points on a line $L$ and $\mathcal{D}$ is a set of disks of radius $r$ identifying $\mathcal P$. Let $x_1$,..$x_n$ be the points of $\mathcal P$. We confuse $x_i$ with its abscissa on $L$ and we assume that $x_1<...<x_n$.

Note that, since the points are colinear and the centers of the points can be chose anywhere on the plane, a set of points of $\mathcal P$ can be the intersection of $\mathcal P$ and a disk of radius $r$ if and only if there are consecutive points of $\mathcal P$ at distance at most $2r$. In the following, we will refer often directly to the set of points contained in a disk $D$ instead of $D$ itself.

The set $\mathcal{D}$ is \textit{optimal} if $|\mathcal{D}| = \id_D(\mathcal P)$, it is \textit{perfect} if $n$ is odd and if $|\mathcal{D}| = \frac{n+1}{2}$ (in particular $\mathcal{D}$ is optimal).

The disks of $\mathcal{D}$ partitions the points of $\mathcal P$ on connected components. More formally, we define an equivalence relation $x \sim_{\mathcal{D}} y$ meaning that $x$ and $y$ are connected by a path of disks. For $x$ and $y$ two points of $\mathcal{P}$, we have $x \sim_{\mathcal{D}} y$ if and only if there is a disk $D$ in $\mathcal{D}$ that contains both $x$ and $y$, or there is a point $z$ in $\mathcal{P}$ such that $x \sim_{\mathcal{D}} z$ and $y \sim_{\mathcal{D}} z$.

The equivalence classes $(\mathcal{P}_i)$ of $\sim_{\mathcal{D}}$  are made of consecutive points. Let $\mathcal{D}_i = \{D \in \mathcal{D} | D \cap \mathcal{P}_i \neq \emptyset\}$ be the disks containing points the points of $(\mathcal{P}_i)$. 

A set of disks $\mathcal{D}$ is \textit{piece-wise perfect} if it is optimal and if each $\mathcal{D}_i$ perfectly identifies $\mathcal{P}_i$.

\begin{lemma}
\label{lem:pw_perfect}
For any set of colinear points $\mathcal{P}$, there is a set $\mathcal{D}$ of disks of radius $r$ that identifies $\mathcal{P}$ and is piece-wise perfect.
\end{lemma}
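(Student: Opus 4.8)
The plan is to start from an arbitrary optimal identifying set $\mathcal{D}$ of disks of radius $r$ and transform it, without increasing its size, into one that is piece-wise perfect. The key observation is Theorem~\ref{th:align}: on $n$ colinear points $\id_D(\mathcal P)=\lceil (n+1)/2\rceil$, and the lower-bound argument there counts intersections of disk perimeters with the line $L$. I would first re-run that counting argument \emph{locally}, on each equivalence class $\mathcal P_i$: since the disks of $\mathcal D_i$ cover and separate all of $\mathcal P_i$ and no disk of $\mathcal D_i$ meets any other class, $\mathcal D_i$ is an identifying set for $\mathcal P_i$, so $|\mathcal D_i|\ge \lceil (|\mathcal P_i|+1)/2\rceil$. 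Summing over $i$ and using $|\mathcal D|=\sum_i|\mathcal D_i|$ (the classes are disjoint and each disk of $\mathcal D$ lies in exactly one $\mathcal D_i$, because a disk is an interval of consecutive points and hence cannot straddle two classes) gives $|\mathcal D|\ge \sum_i \lceil (|\mathcal P_i|+1)/2\rceil \ge \lceil (n+1)/2\rceil = |\mathcal D|$. Hence every inequality is tight: each $\mathcal D_i$ is an optimal identifying set of $\mathcal P_i$, and the ceilings must add up exactly, which forces at most one class to have even size.

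Next I would upgrade "each $\mathcal D_i$ is optimal for $\mathcal P_i$" to "each $\mathcal D_i$ is \emph{perfect} for $\mathcal P_i$," i.e.\ $|\mathcal P_i|$ is odd. The equality $\sum_i \lceil (|\mathcal P_i|+1)/2\rceil = \lceil (\sum_i|\mathcal P_i|+1)/2\rceil$ already shows at most one class can have even size. To eliminate that last even class, I would give a local surgery: if some $\mathcal P_i$ has even size $2q$, then $|\mathcal D_i|=q+1$; I borrow one point from a neighbouring odd class (or, if $\mathcal P_i$ is the only class, handle it directly) — more cleanly, I would argue that a single even class of size $2q$ together with the disk structure can always be re-identified after merging it with an adjacent class so that the merged class has odd size, keeping the disk count equal to the sum of the two old ceilings, since $\lceil(2q+1)/2\rceil+\lceil(2p+1)/2\rceil \ge \lceil(2q+2p+1)/2\rceil +\lceil\,\cdot\,\rceil$-bookkeeping still leaves us at $\lceil(n+1)/2\rceil$. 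The cleanest route is probably: if $\mathcal P$ itself has odd size, the equality of ceilings forces \emph{all} classes odd; if $\mathcal P$ has even size, delete the rightmost point $x_n$, apply the odd case to get a piece-wise perfect identification of $x_1,\dots,x_{n-1}$, then re-attach $x_n$ to the last class by one extra disk (using that consecutive points within $2r$ give a disk, and adding $x_n$ to a perfect odd run of length $2t-1$ turns it into an even run using one more disk, matching $\lceil n/2\rceil = \lceil (n-1+1)/2\rceil$... here one must check the arithmetic, which is the delicate bookkeeping step).

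The main obstacle I expect is precisely this last point: the definition of \emph{piece-wise perfect} demands that \emph{every} $\mathcal D_i$ be perfect, hence every class odd, yet when $n$ is even the total $\lceil(n+1)/2\rceil=n/2+1$ cannot be split as a sum of $\lceil(\text{odd}+1)/2\rceil$ terms covering an even total --- so in the even case there must genuinely be a "defect," and the statement as written can only hold by having the construction produce, for even $n$, a decomposition where one class is a single uncovered/edge point absorbed differently, or by the authors implicitly restricting attention to the structurally relevant odd case (which is all that is needed later, since the linear-time algorithm builds the solution run by run). I would resolve this by reading "perfectly identifies" for the unique even class in the weakest admissible sense, or — more likely matching the authors' intent — by showing that from any optimal $\mathcal D$ one can locally rearrange disks so that all equivalence classes but possibly a trailing one are perfect, and then fold that degenerate case into the definition. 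Concretely the surgery I would use at each step is: whenever a class is identified non-perfectly, find within it two disks sharing a point $z$ with $z$ at the "end" of its run, replace them by the two canonical disks $D_{i,i+1}$, $D_{i+1,i+2}$ of the greedy colinear construction from Theorem~\ref{th:align} (which exist because the points they must contain are consecutive and the relevant gaps are $\le 2r$ since the original disks already spanned them), and induct on the number of "bad" disks. Each replacement preserves identification and size, and terminates with a piece-wise perfect set.
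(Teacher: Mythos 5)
Your central argument breaks at the very first step. You write $|\mathcal D|\ge \sum_i \lceil (|\mathcal P_i|+1)/2\rceil \ge \lceil (n+1)/2\rceil = |\mathcal D|$ and conclude that all inequalities are tight; but the final equality is false in this setting. The value $\lceil (n+1)/2\rceil$ is $\id_D(\mathcal P)$ for \emph{unbounded} radius (Theorem~\ref{th:align}), whereas Lemma~\ref{lem:pw_perfect} lives in the fixed-radius world, where the paper explicitly notes that $\id_{D,r}(\mathcal P)$ can be anything between $\lceil (n+1)/2\rceil$ and $n$ (e.g.\ if all consecutive gaps exceed $2r$, every disk contains at most one point and $n$ disks are needed). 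So the sandwich does not close, tightness is not forced, and the entire first paragraph collapses. The downstream parity discussion is a symptom of the same error: you implicitly assume there is essentially one equivalence class of total size $n$, so that even $n$ looks like an obstruction. In fact the classes are determined by the chosen disk set and need not be a single block; an even $n$ splits perfectly well into odd classes (e.g.\ $4=1+3$), and the lemma as stated is correct with no weakening of ``perfect'' required. Your proposed fix of reinterpreting the definition is therefore solving a non-problem.

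What is actually needed, and what the paper does, is an extremal-plus-surgery argument: among all \emph{optimal} identifying sets of radius-$r$ disks, choose one minimizing the secondary quantity $\sum_{D\in\mathcal D}|D\cap\mathcal P|$, then show that if some class $\mathcal P_i$ is not perfectly identified, the crossing-count argument of Theorem~\ref{th:align} produces two disk boundaries meeting $L$ in the same gap (or both before the first point, or both after the last), and a local exchange of disks either strictly decreases the total coverage or reduces to a case where it does --- contradiction. Your closing sketch gestures at such a surgery, but it is missing the invariant that makes the descent terminate, and the replacement disks you invoke (the $D_{i,i+\lceil n/2\rceil}$ of the greedy construction in Theorem~\ref{th:align}) generally require unbounded radius; the legitimate replacements must be obtained by \emph{shrinking} existing disks (removing endpoints of the consecutive run they contain), which is always feasible at radius $r$.
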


\begin{proof}
Let $\mathcal{D}$ be a set of disks that identifies optimally $\mathcal{P}$ and such that $\sum\limits_{D \in \mathcal{D}}{|D \cap \mathcal{P}|}$ is minimal.
We will prove that this set is piece-wise perfect.

Assume the contrary. It means that there is a set $\mathcal{P}_i$ which is not perfectly identified by $\mathcal{D}_i$. Following the proof of Theorem~\ref{th:align}, this means that there are two disks $D_1$ and $D_2$ whose perimeters intersect $L$ between the same pair of adjacent points of $\mathcal{P}_i$ or both before the first point of $\mathcal{P}_i$ or both after the last point of $\mathcal{P}_i$.
Let $x_a,...,x_b$ be the points covered by $D_1$ and $x_c,...,x_d$ the points covered by $D_2$.

\textbf{Case 1 }: $a = c$ (the case $b = d$ is similar). Suppose, without loss of generality, that $d \leq b$. Let $D_1'$ be a disk that contains the points from $x_a+1$ to $x_{b}$, such a disk exist because its intersection with $\mathcal{P}$ is included in $D_1 \cap \mathcal{P}$. Then $\mathcal{D}' =\mathcal{D} \setminus \{D_1\} \cup \{D_1'\}$ identifies $\mathcal{P}$. Indeed, the only point of $\mathcal{P}$ for whom the situation is different for $\mathcal{D}$ and $\mathcal{D}'$ is $x_a$. For $\mathcal{D}'$ it is the only point of $\mathcal{P}$ that is inside $D_2$ and not inside $D_1'$. So we have a new set of disks that identifies $\mathcal{P}$ and such that the sum of the number of points contained in each disk is smaller. This is a contradiction to the minimal property of $\mathcal{D}$.

\textbf{Case 2} : $c = b+1$ (the case $a = d+1$ is similar).
Since $\mathcal{P}_i$ is an equivalence class for the relation $\sim_{\mathcal{D}}$, we must have $x_b \sim_{\mathcal{D}} x_{b+1}$ and there must be a disk $D_3$ such that $D_3$ contains both $x_b$ and $x_{b+1}$. Let  $x_e$ the first point of $D_3$ and $x_f$ its last point.

\textbf{Subcase 2.1} : $a<e<b<f<d$.

Let $D_1'$ be a disk that contains the point from $x_a$ to $x_{b-1}$, such a disk can exist because it is smaller than $D_1$, and let $\mathcal{D}' = \mathcal{D} \setminus \{D_1\} \cup \{D_1'\}$. $\mathcal{D}'$ identifies $\mathcal{P}$ and contradict the minimal property of $\mathcal{D}$.

\textbf{Subcase 2.2} : $e<a$ (the case $f>d$ is similar).

Let $D_1'$ be a disk that contains the point from $x_e$ to $x_b$, such a disk can exist because it is smaller than $D_3$ and $D_3'$ be a disk that contains the points from $x_a$ to $x_f$, such a disk can exist because it is smaller than $D_3$. The set of disks $\mathcal{D}' = \mathcal{D} \setminus \{D_1,D_3\} \cup \{D_1',D_3'\}$ identifies $\mathcal{P}$. Indeed, the only points of $\mathcal{P}$ for whom the situation is different for $\mathcal{D}$ and $\mathcal{D}'$ are those between $x_e$ and $x_{a-1}$. They are still separated from each other by the disks that separates them in $\mathcal{D}$ and they are separated from the other points because they are the only points that are in $D_1'$ and not in $D_3'$. The sum of the number of points contained in the disk is the same for $\mathcal{D}'$ and $ \mathcal{D}$.

Finally, the sum of the number of points contained in the disks remains the same, but now the disk that contains both $x_b$ and $x_{b+1}$ does not contain the disks that intersect the region between $x_b$ and $x_{b+1}$. So we are now in Subcase 2.1 and we can apply the method used in this case, concluding the proof.

\end{proof}

A set of disks $\mathcal{D}$ identifying a set $\mathcal{P}$ of $n$ colinear points is in \textit{normal form} if $n$ is odd and if :
\begin{itemize}
\item $n=1$ and $\mathcal{D}$ is composed of a unique disk containing the point of $\mathcal{P}$.

or

\item $n=2p+1$, $p\geq 1$, and $\mathcal{D}= \{D_i\}_{i\in [0,p]}$ with $D_0$ containing the points $x_1$ and $x_2$, $D_p$ containing the points $x_{2p}$ and $x_{2p+1}$ and, for $i\in [1,p-1]$, $D_i$ containing the points $x_{2i}$, $x_{2i+1}$ and $x_{2i+2}$.
\end{itemize}

In particular, $\mathcal{D}$ is perfect.

\begin{lemma}
\label{lem:perfectTo3}
For any set of colinear points $\mathcal{P}$, if there is a set of disks that perfectly identifies $\mathcal{P}$, then there exists a set $\mathcal{D}$ that identifies $\mathcal{P}$ and is in normal form.
\end{lemma}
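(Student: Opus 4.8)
The plan is to induct on the number of points, which for a perfect identifying set is necessarily odd, say $n=2p+1$, peeling off two points and one disk at a time from the left end. Recall first, from the intersection count in the proof of Theorem~\ref{th:align}, that a perfect identifying set $\mathcal D$ (so $|\mathcal D|=\tfrac{n+1}{2}$) of $x_1<\dots<x_n$ on a line $L$ has the following shape: every circle of $\mathcal D$ meets $L$ exactly twice, with exactly one intersection in each of the $n+1$ \emph{gaps} of $L$ (before $x_1$, between consecutive points, after $x_n$); thus each disk meets $\mathcal P$ in a block of consecutive points, and the $n+1$ endpoints of these blocks use each gap exactly once. I will use one preliminary fact: in a perfect identifying set \emph{every disk contains at least two points}. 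Indeed, if $D\cap\mathcal P=\{x_1\}$, then since gap $0$ is used only by $D$ no other disk contains $x_1$, so $\mathcal D\setminus\{D\}$ identifies the $2p$ points $x_2,\dots,x_n$ with only $p$ disks, contradicting Theorem~\ref{th:align}; the case $\{x_n\}$ is symmetric; and if $D\cap\mathcal P=\{x_i\}$ with $1<i<n$, then gaps $i-1$ and $i$ are the endpoints of $D$, so every other disk meeting $x_i$ also contains $x_{i-1}$ and $x_{i+1}$ while no disk separates $x_{i-1}$ from $x_{i+1}$, a contradiction.

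For $n=1$ the single disk contains $x_1$ and $\mathcal D$ is already in normal form. Let $n\ge3$ and assume the result for all smaller odd cardinalities. Consider first the case where the disk $D$ containing $x_1$ has $D\cap\mathcal P=\{x_1,x_2\}$. Then gap $1$ starts a disk $D'$ with $D'\cap\mathcal P=\{x_2,\dots,x_b\}$, where $b\ge3$ by the preliminary fact, and in fact $b\ge4$ whenever $n\ge5$ (otherwise $x_4$ would be covered by no disk, since a disk reaching $x_4$ must start at gap $0$ or $1$, hence be $D$ or $D'$, which end at gaps $2$ and $3$). Set $\mathcal P^\ast=\{x_3,\dots,x_n\}$ and
\[
\mathcal D^\ast=\bigl(\mathcal D\setminus\{D,D'\}\bigr)\cup\bigl\{\{x_3,\dots,x_b\}\bigr\}.
\]
The new disk is realisable (it is contained in $D'$) and, inside $\mathcal P^\ast$, covers exactly the points that $D'$ did, whereas $D$ met no point of $\mathcal P^\ast$; hence $\mathcal D^\ast$ has $p$ disks and identifies $\mathcal P^\ast$, so it is a perfect identifying set of the $(2p-1)$-point set $\mathcal P^\ast$. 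By induction $\mathcal P^\ast$ has a perfect identifying set $\mathcal E^\ast$ in normal form, whose first disk is $\{x_3,x_4\}$ when $n\ge5$ (resp.\ $\{x_3\}$ when $n=3$). Replacing that disk by $\{x_2,x_3,x_4\}$ (resp.\ $\{x_2,x_3\}$) and adding $\{x_1,x_2\}$ yields exactly the normal form on $\mathcal P$; the two new disks are realisable with radius $r$ because $\{x_1,x_2\}\subseteq D$ and $\{x_2,x_3,x_4\}\subseteq D'$.

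It remains to reduce the general case to this one, that is, to replace $\mathcal D$ by another perfect identifying set of $\mathcal P$ in which the disk containing $x_1$ (equivalently, by the left-right symmetry, the one containing $x_n$) has exactly two points. Writing $D\cap\mathcal P=\{x_1,\dots,x_m\}$ with $m\ge3$, the natural move is to shrink $D$ to $\{x_1,\dots,x_{m-1}\}$; this frees gap $m$ but leaves gap $m-1$ used twice, forcing the neighbouring disk with an endpoint at gap $m-1$ to re-route that endpoint to gap $m$, which lowers $m$ by one; one then iterates down to $m=2$. The hard part will be checking that identification survives each such move: a single naive exchange can produce two points with the same set of disks, and one must then either apply a compensating exchange near the offending point or be more careful about which disk is shrunk. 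A convenient device is to take $\mathcal D$ minimising $\sum_{D\in\mathcal D}|D\cap\mathcal P|$ among perfect identifying sets (such a set exists by Lemma~\ref{lem:pw_perfect}) and to show that minimality forces the first disk down to two points; carrying out this case analysis is the only genuinely technical step of the proof.
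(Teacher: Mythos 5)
Your preliminary structural analysis is sound (in a perfect set each of the $n+1$ gaps of $L$ is crossed exactly once, and no disk meets $\mathcal P$ in a single point when $n\ge 3$), and the inductive step in the special case where the unique disk through $x_1$ is $\{x_1,x_2\}$ is correct and complete, including the realizability checks. But the proof has a genuine gap exactly where you flag it: the reduction from an arbitrary perfect set to one whose first disk contains only $x_1,x_2$ is not carried out, and the sketched move does not obviously go through. Shrinking $D=\{x_1,\dots,x_m\}$ to $\{x_1,\dots,x_{m-1}\}$ and re-routing the endpoint lying in gap $m-1$ to gap $m$ hits two concrete obstructions. First, realizability: if the disk $E$ with an endpoint in gap $m-1$ has trace $\{x_e,\dots,x_{m-1}\}$, the re-routed disk must contain $\{x_e,\dots,x_m\}$, and $x_m-x_e$ may exceed $2r$, so no disk of radius $r$ realizes it. Second, identification: gap $m$ was crossed only by $D$, so after the shrink nothing separates $x_m$ from $x_{m+1}$ unless the re-routing succeeds; and if instead $E=\{x_m,\dots,x_e\}$, re-routing its left endpoint to gap $m$ removes $x_m$ from $E$, which can leave $x_m$ uncovered or give it the signature of $x_{m+1}$. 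Your fallback --- minimize $\sum_{D\in\mathcal D}|D\cap\mathcal P|$ over perfect sets and argue minimality forces $m=2$ --- is plausible, but that case analysis is precisely where the content of the lemma lives, so deferring it leaves the statement unproved.

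For comparison, the paper sidesteps all of this surgery by proving the contrapositive. A perfect set forces a single $\sim_{\mathcal D}$-class, hence all consecutive points within $2r$; so if no normal form exists, some pair $x_{2i},x_{2i+2}$ with $1\le i\le p-1$ is at distance more than $2r$. Splitting $\mathcal P$ there, the disks meeting $\{x_1,\dots,x_{2i}\}$ and those meeting $\{x_{2i+2},\dots,x_{2p+1}\}$ are disjoint, and the lower bound of Theorem~\ref{th:align} applied to each part yields $|\mathcal D|\ge (i+1)+(p-i+1)=p+2$, contradicting perfection. If you want to keep your left-to-right induction, the cleanest repair is to import this counting idea to show directly that in the presence of a perfect set the traces $\{x_1,x_2\}$ and $\{x_2,x_3,x_4\}$ are realizable (i.e., $x_2-x_1\le 2r$ and $x_4-x_2\le 2r$), rather than trying to deform the given disks.
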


\begin{proof}
If there is only one point in $\mathcal{P}$, then the only way to identify perfectly $\mathcal{P}$ is to have a set $\mathcal{D}$ that contains exactly one disk which contains the point of $\mathcal{P}$, it always exists and it is already in normal form.

Suppose that $\mathcal P$ is of size $2p+1$ with $p \geq 1$. We show that if there is no set of disks identifying $\mathcal{P}$ in normal form then there is no set perfectly identifying $\mathcal{P}$.

Assume that there is no set of disks identifying $\mathcal P$ in normal form but that a set $\mathcal D$ perfectly identifies $\mathcal P$. Necessarily, $\sim_{\mathcal D}$ has only one equivalnce class and all the adjacent points of $\mathcal P$ are distant at most $2r$.

Since there is no possible set in normal form, there exists $i \in [1,p-1]$ such that the distance between the points $x_{2i}$ and $x_{2i+2}$ is greater than $2r$. Let $\mathcal P_1$ be the set $\{x_1, ... , x_{2i}\}$ and $\mathcal P_2$ be the set $\{x_{2i+2}, ... , x_{2p+1}\}$. Let $\mathcal{D}_1$ (respectively $\mathcal D_2$) be the subset of disks of $\mathcal D$ that contains at least one point of $\mathcal P_1$ (resp. $\mathcal P_2$). The intersection between $\mathcal D_1$ and $\mathcal D_2$ is empty since the distance between $x_{2i}$ and $x_{2i+2}$ is at least $2r$. By Theorem~\ref{th:align}, since $\mathcal D_1$ identifies $\mathcal P_1$ and $\mathcal D_2$ identifies $\mathcal P_2$, $|\mathcal D_1| \geq \lceil \frac{2i+1}{2}\rceil$ and $|\mathcal D_2| \geq \lceil \frac{2(p-i)+1}{2}\rceil$. So $|\mathcal D| \geq |\mathcal D_1| +|\mathcal D_2| \geq \lceil \frac{2i+1}{2}\rceil +  \lceil \frac{2(p-i)+1}{2}\rceil = p+2$. Hence $\mathcal D$ does not identify $\mathcal P$ perfectly, a contradiction.
\end{proof}

\begin{proof}[Proof of Theorem~\ref{th:colin_id_r}]
By Lemma~\ref{lem:pw_perfect}, there exist a set identifying $\mathcal{P}$ that is piece-wise perfect. By Lemma~\ref{lem:perfectTo3}, every perfect part of that set can be identified by a set of disks in normal form. So there is a piece-wise perfect set of disks $\mathcal{D}$ such that every $\mathcal{D}_i$ is in normal form.

We now give an algorithm that finds an optimal solution to identify $\mathcal{P}$ with connected sets of disks that are in normal form :

\begin{algorithm}
\begin{algorithmic}
\REQUIRE the abscissas ${x_1,..., x_n}$ of a set of colinear points $\mathcal{P}$
\ENSURE $\mathcal{D}$ is a minimal identifying set that can be partitioned into subset in normal form.

\STATE $i \leftarrow 0$
\STATE $\mathcal{D} \leftarrow \emptyset$
\STATE $x_{n+1} \leftarrow \infty$, $x_{n+2} \leftarrow \infty$, $x_{n+3} \leftarrow \infty$
\WHILE{$i \leq n$}

\IF{$x_{i+1} - x_i > 2r$ or $x_{i+2} - x_{i+1} > 2r$}
\STATE Add to $\mathcal{D}$ a disk that contains only $x_i$
\STATE $i \leftarrow i+1$

\ELSE
\STATE Add to $\mathcal{D}$ a disk that contains only $x_i$ and $x_{i+1}$
\STATE $i \leftarrow i+1$

\WHILE{$x_{i+2}-x_i \leq 2r$ and $x_{i+3}-x_{i+2} \leq 2r$}
\STATE Add to $\mathcal{D}$ a disk that contains only $x_i$, $x_{i+1}$ and $x_{i+2}$
\STATE $i \leftarrow i+2$

\ENDWHILE

\STATE Add to $\mathcal{D}$ a disk that contains only $x_i$ and $x_{i+1}$
\STATE $x_i \leftarrow x_{i+2}$

\ENDIF
\ENDWHILE
\end{algorithmic}
\end{algorithm}

This algorithm takes the biggest connected sets of disks in normal form starting with the first point of $\mathcal{P}$. We prove that this is optimal.
Assumme it is not the case. Let $\mathcal{P}$ be a set of points such that the set $\mathcal{D}$ given by the algorithm is not an optimal solution. We choose $\mathcal P$ with a minimum number of points.
Let $\mathcal{D}^{opt}$ be an optimal set in normal form. Its first connected component $\mathcal{D}^{opt}_1$ is smaller than the first connected component of $\mathcal{D}$, $\mathcal{D}_0$. Indeed, it cannot be bigger since the algorithm take the biggest connected component and it cannot be the same since, by minimality of $\mathcal P$, the algorithm is optimal on the rest of the points. So $\mathcal{D}^{opt}$ identifies $\mathcal{P}^{opt}_1$, the points of $\mathcal{P}$ that are not in the disks of $\mathcal{D}^{opt}_1$ with less disks than $\mathcal{D}$ uses to identify $\mathcal{P}_1$, the points of $\mathcal{P}$ that are not in $\mathcal{D}'$. Since $\mathcal{P}_1 \subset \mathcal{P}^{opt}_1 $, $\mathcal{D}_{opt}$ also identifies $\mathcal{P}_1$, and thus with less disks than $\mathcal{D}$. This contradicts the minimality of $\mathcal{P}$.

This algorithm is linear since we consider each point at most once.

So there is a linear algorithm to find the maximum number of disks needed to identify a set of points if each connected part must be in normal form. By Lemma~\ref{lem:pw_perfect} and Lemma~\ref{lem:perfectTo3}, this algorithm also gives a solution to {\sc Colinear Identification-Disk}(r).
\end{proof}

\section{Conclusion}

We conclude with some open problems. About complexity issues, we do not know if computing a minimum identifying set of disks when the radius is not fixed is $\mathcal NP$-complete, but the contrary would be surprising. The complexity of identification with lines seems to be also open. An intersecting question is what is the number of disks needed if the points are randomly chosen in a $1\times1$ square. It would also be interesting to consider identifications with other sets or in higher dimensions using balls instead of disks.

\bibliographystyle{plain}
\bibliography{bibli}

\begin{thebibliography}{10}

\bibitem{AHMSS00}
Esther~M Arkin, Ferran Hurtado, Joseph~SB Mitchell, Carlos Seara, and Steven
  Skiena.
\newblock Some separability problems in the plane.
\newblock In {\em EuroCG}, pages 51--54, 2000.

\bibitem{B80}
L{\'a}szl{\'o} Babai.
\newblock On the complexity of canonical labeling of strongly regular graphs.
\newblock {\em SIAM Journal on Computing}, 9(1):212--216, 1980.

\bibitem{BDFHMP16}
Laurent Beaudou, Florent Foucaud, Peter Dankelmann, Michael~A Henning, Arnaud
  Mary, and Aline Parreau.
\newblock Bounding the order of a graph using its diameter and metric
  dimension: a study through tree decompositions and vc dimension.
\newblock {\em arXiv preprint arXiv:1610.01475}, 2016.

\bibitem{BLLPT15}
Nicolas Bousquet, Aur{\'e}lie Lagoutte, Zhentao Li, Aline Parreau, and
  St{\'e}phan Thomass{\'e}.
\newblock Identifying codes in hereditary classes of graphs and vc-dimension.
\newblock {\em SIAM Journal on Discrete Mathematics}, 29(4):2047--2064, 2015.

\bibitem{CDW05}
Gruia C{\u{a}}linescu, Adrian Dumitrescu, Howard Karloff, and Peng-Jun Wan.
\newblock Separating points by axis-parallel lines.
\newblock {\em International Journal of Computational Geometry \&
  Applications}, 15(06):575--590, 2005.

\bibitem{ceder1964generalized}
Jack~G Ceder.
\newblock Generalized sixpartite problems.
\newblock {\em Bol. Soc. Mat. Mexicana (2)}, 9:28--32, 1964.

\bibitem{DHMS01}
Olivier Devillers, Ferran Hurtado, Merce Mora, and Carlos Seara.
\newblock Separating several point sets in the plane.
\newblock In {\em In Proc. 13th Canadian Conference on Computational Geometry}.
  Citeseer, 2001.

\bibitem{orourke2}
Herbert Edelsbrunner, Joseph O’Rourke, and Raimund Seidel.
\newblock Constructing arrangements of lines and hyperplanes with applications.
\newblock {\em SIAM Journal on Computing}, 15(2):341--363, 1986.

\bibitem{FMNPV}
Florent Foucaud, George~B Mertzios, Reza Naserasr, Aline Parreau, and Petru
  Valicov.
\newblock Identification, location-domination and metric dimension on interval
  and permutation graphs. ii. algorithms and complexity.
\newblock {\em Algorithmica}, pages 1--31, 2016.

\bibitem{GT13}
D{\'a}niel Gerbner and G{\'e}za T{\'o}th.
\newblock Separating families of convex sets.
\newblock {\em Computational Geometry}, 46(9):1056--1058, 2013.

\bibitem{JL11}
Ville Junnila and Tero Laihonen.
\newblock Identification in $\mathbb{Z}^2$ using euclidean balls.
\newblock {\em Discrete Applied Mathematics}, 159(5):335--343, 2011.

\bibitem{KCL98}
Mark~G Karpovsky, Krishnendu Chakrabarty, and Lev~B Levitin.
\newblock On a new class of codes for identifying vertices in graphs.
\newblock {\em IEEE Transactions on Information Theory}, 44(2):599--611, 1998.

\bibitem{K95}
Alexander Kogan.
\newblock On the essential test sets of discrete matrices.
\newblock {\em Discrete Applied Mathematics}, 60(1-3):249--255, 1995.

\bibitem{KSPS02}
Farinaz Koushanfar, Sasha Slijepcevic, Miodrag Potkonjak, and Alberto
  Sangiovanni-Vincentelli.
\newblock Error-tolerant multimodal sensor fusion.
\newblock In {\em IEEE CAS Workshop on Wireless Communication and Networking},
  pages 5--6, 2002.

\bibitem{PhDMoncel}
Julien Moncel.
\newblock {\em Codes identifiants dans les graphes}.
\newblock PhD thesis, Universit{\'e} Joseph-Fourier-Grenoble I, 2005.

\bibitem{MS85}
Bernard~ME Moret and Henry~D Shapiro.
\newblock On minimizing a set of tests.
\newblock {\em SIAM Journal on Scientific and Statistical Computing},
  6(4):983--1003, 1985.

\bibitem{MS09}
Tobias M{\"u}ller and J-S Sereni.
\newblock Identifying and locating--dominating codes in (random) geometric
  networks.
\newblock {\em Combinatorics, Probability and Computing}, 18(06):925--952,
  2009.

\bibitem{orourke}
Joseph O'rourke, S~Rao~Kosaraju, and Nimrod Megiddo.
\newblock Computing circular separability.
\newblock {\em Discrete \& Computational Geometry}, 1(1):105--113, 1986.

\bibitem{BU95}
Boland~R. P. and J.~Urrutia.
\newblock Separating families of points on the plane.
\newblock {\em Information Processing Letters}, 53:177--183, 1995.

\bibitem{PP80}
RW~Payne and DA~Preece.
\newblock Identification keys and diagnostic tables: A review.
\newblock {\em Journal of the Royal Statistical Society. Series A (General)},
  pages 253--292, 1980.

\bibitem{SS11}
Suk~J Seo and Peter~J Slater.
\newblock Open neighborhood locating--dominating in trees.
\newblock {\em Discrete Applied Mathematics}, 159(6):484--489, 2011.

\bibitem{S88}
Peter~J. Slater.
\newblock Dominating and reference sets in a graph.
\newblock {\em Journal of Mathematical and Physical Sciences}, 22(4):445--455,
  1988.

\bibitem{SR84}
Peter~J. Slater and Douglas~F. Rall.
\newblock On location-domination numbers for certain classes of graphs.
\newblock {\em Congressus Numerantium}, 45:97--106, 1984.

\bibitem{oeisA014206}
Neil James~Alexander Sloane.
\newblock {The On-Line Encyclopedia of Integer Sequences, Sequence A014206}.
\newblock \url{http://oeis.org/A014206}.

\bibitem{UTS04}
Rachanee Ungrangsi, Ari Trachtenberg, and David Starobinski.
\newblock An implementation of indoor location detection systems based on
  identifying codes.
\newblock In {\em Intelligence in Communication Systems}, pages 175--189.
  Springer, 2004.

\bibitem{P3}
Ren{\'e} Van~Bevern, Robert Bredereck, Laurent Bulteau, Jiehua Chen, Vincent
  Froese, Rolf Niedermeier, and Gerhard~J Woeginger.
\newblock Star partitions of perfect graphs.
\newblock In {\em International Colloquium on Automata, Languages, and
  Programming}, pages 174--185. Springer, 2014.

\end{thebibliography}
\end{document}